\theoremstyle{definition}
\newtheorem{theorem}{Theorem}
\newtheorem{lemma}[theorem]{Lemma}
\theoremstyle{remark}
\newcommand{\be}{\begin{equation}}
\newcommand{\ee}{\end{equation}}
\newcommand{\bi}{\begin{itemize}}
\newcommand{\ei}{\end{itemize}}
\def\ba#1\ea{\begin{align}#1\end{align}}
\def\bg#1\eg{\begin{gather}#1\end{gather}}
\def\bm#1\em{\begin{multline}#1\end{multline}}
\def\bmd#1\emd{\begin{multlined}#1\end{multlined}}
\def\a{\alpha}
\def\b{\beta}
\def\c{\chi}
\def\d{\delta}
\def\D{\Delta}
\def\e{\epsilon}
\def\g{\gamma}
\def\l{\lambda}
\def\L{\Lambda}
\def\m{\mu}
\def\r{\rho}
\def\s{\sigma}
\def\y{\psi}
\newcommand{\la}{\label}
\newcommand{\re}{\ref}
\newcommand{\er}{\eqref}
\newcommand{\fr}{\frac}
\newcommand{\td}{\widetilde}
\newcommand{\eq}{\equiv}
\newcommand{\cd}{\cdots}
\newcommand{\qu}{\quad}
\newcommand{\qqu}{\qquad}
\newcommand{\lt}{\left}
\newcommand{\rt}{\right}
\newcommand{\ol}{\overline}
\renewcommand{\(}{\left(}
\renewcommand{\)}{\right)}
\renewcommand{\[}{\left[}
\renewcommand{\]}{\right]}
\newcommand{\<}{\langle}
\renewcommand{\>}{\rangle}
\newcommand{\bC}{{\mathbb C}}
\newcommand{\cA}{{\mathcal A}}
\newcommand{\cB}{{\mathcal B}}
\newcommand{\cH}{{\mathcal H}}
\newcommand{\cO}{{\mathcal O}}
\newcommand{\cZ}{{\mathcal Z}}
\newcommand{\Bb}{{\ol B}}
\newcommand{\ob}{{\bar o}}
\newcommand{\rb}{{\bar r}}
\newcommand{\ub}{{\bar u}}
\def\tr{\operatorname{tr}}
\newcommand{\bdy}{\text{bdy}}
\def\UV{\text{UV}}
\def\IR{\text{IR}}
\def\pre{\text{pre}}
\def\seed{\text{seed}}
\def\rest{\text{rest}}
\def\supp{\operatorname{supp}}
\def\vN{\operatorname{vN}}
\def\ln{\log}
\begin{document}

\definecolor{acolor}{rgb}{0.1,.6,0}
\newcommand{\XD}[1]{{\color{acolor}#1}}
\newcommand{\XDc}[1]{\textcolor{acolor}{[XD:#1]}}
\definecolor{rcolor}{rgb}{0.9,0.1,0.1}
\newcommand{\DM}[1]{{\color{rcolor}#1}}
\newcommand{\DMc}[1]{\textcolor{rcolor}{[DM:#1]}}
\definecolor{mcolor}{rgb}{0.8,0.1,0.6}
\newcommand{\DMTD}[1]{\textcolor{mcolor}{[DM:#1]}}
\definecolor{bcolor}{rgb}{0.1,0,1}
\newcommand{\PR}[1]{{\color{bcolor}#1}}
\newcommand{\PRc}[1]{\textcolor{bcolor}{[PR:#1]}}

\title{Holographic codes and bulk RG flows}

\author[a]{Xi Dong,}
\author[a]{Donald Marolf,}
\author[b]{and Pratik Rath}

\affiliation[a]{Department of Physics, University of California, Santa Barbara, CA 93106, USA}
\affiliation[b]{Leinweber Institute for Theoretical Physics and Department of Physics,\\
University of California, Berkeley, California 94720, U.S.A.}

\emailAdd{xidong@ucsb.edu}
\emailAdd{marolf@ucsb.edu}
\emailAdd{pratik\_rath@berkeley.edu}

\abstract{We consider the coarse-graining of holographic quantum error correcting codes under a generalized notion of bulk renormalization-group flow.   In particular, we study the renormalization under this flow of the $A/4G$ term in the Faulkner-Lewkowycz-Maldacena formula and in its R\'enyi generalization.  This provides a general quantum code perspective on the arguments of Susskind and Uglum. Specifically,  given a `UV' code with two-sided recovery and appropriately flat entanglement spectrum together with a set of `seed' states in the UV code, we explicitly construct an `IR' code with corresponding properties which contains the given seed states and is of minimal size in a sense we describe.
}

\maketitle

\section{Introduction}

It is now well-established that an appropriate semiclassical bulk limit of the AdS/CFT bulk-to-boundary map can be thought of as a quantum error correcting code \cite{Almheiri:2014lwa}; see e.g. \cite{Jafferis:2015del,Dong:2016eik,Harlow:2016vwg,Kamal:2019skn}.  This idea has thus become an integral part of modern discussions of holography, and in fact forms the basis of certain proposals \cite{Akers:2022qdl,Harlow:2025pvj} for understanding the physics of black hole interiors and closed cosmologies.  

Here we explore the behavior of holographic codes under renormalization-group-like transformations in which a bulk Hilbert space is first defined at some ultraviolet (UV) scale and then truncated to a smaller Hilbert space that might, for example, allow rather general infrared (IR) behavior while requiring the UV degrees of freedom to be in a local vacuum state. Susskind and Uglum \cite{Susskind:1994sm} proposed that changes in renormalization-group scale are associated with changes in both Newton's constant $G$ and the von Neumann entropy $S_{vN}$ of certain subsystems such that the generalized entropy $S_{gen} = \frac{A}{4G} + S_{vN}$ remains invariant, where $A$ is the area of an appropriate surface bounding the desired subsystem. Related issues regarding quantum error correction in AdS/CFT were recently explored in \cite{Gesteau:2023hbq}, which demonstrated a setup where the boundary entropy computed using the Faulkner-Lewkowycz-Maldacena (FLM) formula \cite{Faulkner:2013ana} is invariant under renormalization-group (RG) flow.\footnote{See also \cite{Furuya:2020tzv} for discussion of the renormalization group as an error correcting code in contexts without complementary recovery.}

Our analysis will go beyond \cite{Gesteau:2023hbq} in several ways. The first is that we will analyze settings where the bulk algebras recovered by our codes have non-trivial centers in both the UV and the IR. This allows us to treat the area $A$ in the FLM formula as an operator as is natural in gravitational theories \cite{Faulkner:2013ana,Harlow:2016vwg}. In general, we will find these centers to change significantly under our RG flow. Secondly, a special feature of holographic codes is that they possess an (approximately) flat entanglement spectrum so that they satisfy a R\'enyi generalization of the FLM formula \cite{Dong:2018seb,Akers:2018fow,Dong:2019piw}. Our RG flow is designed to ensure that, when this property holds in the UV, it also remains true in the IR. Finally, whereas \cite{Gesteau:2023hbq} assumed the existence of a sequence of codes that are related by RG flow, we will explicitly construct such codes given a set of UV `seed' states that are required to remain in the IR Hilbert space.

Since we will largely work from a code perspective, it is natural to consider very general contexts in which the coarse-graining that defines the IR degrees of freedom need not be strictly local.  As a result, we will think of the resulting flow as a renormalization of the full geometric entropy operator $\sigma = \frac{A}{4G} +\cd$ (where dots include higher-derivative corrections), rather than simply renormalizing the local coupling $G$.  

The above goals are motivated in part because we wish to use our RG transformation in forthcoming works \cite{JLMS,ModFlow} to study modular flow while avoiding the kind of failures of the Jafferis-Lewkowycz-Maldacena-Suh (JLMS) formula \cite{Jafferis:2015del} found in \cite{Kudler-Flam:2022jwd}. As will be explained in \cite{JLMS}, such failures are fundamentally associated with small eigenvalues for the density matrix of a bulk subregion.  Because UV Hilbert spaces tend to be very large, small eigenvalues are difficult to avoid in a UV description. Flowing to an IR description in terms of a smaller Hilbert space will thus make such issues easier to control. 

We will treat the Hilbert space $\cH_\UV$ of our UV holographic code as being finite-dimensional. This means that we impose both UV and IR cutoffs in the bulk, and that we also impose a cutoff on the amplitude of any (bosonic) excitation about some reference state (perhaps a classical background). We will use the term `holographic code' to mean a quantum error-correcting code with complementary recovery (which we will also sometimes call `two-sided recovery') \cite{Harlow:2016vwg} and flat entanglement spectrum in the sense of \cite{Dong:2018seb,Akers:2018fow,Dong:2019piw}.
For simplicity, up until section \re{sec:disc} we will assume these two properties to be exact in our UV holographic code.
Of course, at finite $G$ the AdS/CFT dictionary is known to only approximately have these two properties. But it will simplify our presentation to postpone explicit treatment of such approximations until the final discussion in section \ref{sec:disc}. By that time it will be clear that our treatment of the exact case carries over verbatim.  This is because our main construction takes as input only the UV bulk Hilbert space ${\cal H}_{\UV}$ (and a set of `seed' states therein) and then constructs an IR bulk Hilbert space $\cH_\IR$ whose embedding into $\cH_\UV$ satisfies exact two-sided recovery. As a result, the IR bulk-to-boundary map simply inherits any small deviations from exact two-sided recovery in the UV bulk-to-boundary map. Similar statements will apply to approximate flatness of the entanglement spectrum.

We begin in section \ref{sec:TechOverview} with a more technical overview of our results and an outline of our methods.  We then describe our construction in the particularly simple case in which we specify only a single seed state.  This is done in section \ref{sec:onestate}.  As we will see, choosing this state to be the only state in ${\cal H}_{\IR}$ does not generally lead to a code with an approximate flat entanglement spectrum; additional states will typically need to be added as well, which we will do by chopping the seed state into pieces defined by eigenvalue windows of its boundary modular Hamiltonian.

We then proceed in section \ref{sec:general} to the general construction where we choose an arbitrary set of seed states to be included in the IR code. We will first construct the smallest possible space that contains the given seed states and whose embedding into ${\cal H}_{\UV}$ defines a code with perfect two-sided recovery.  The states in that space will again be further chopped into pieces defined by appropriate eigenvalue windows, constructing our desired IR code with a degree of flatness for its entanglement spectrum set by the sizes of these windows (and which becomes very flat when the windows become small).

Section \ref{sec:ex} provides illustrative examples showing both how our construction can lead to a small ${\cal H}_{\IR}$ and how, for certain choices of seed states, no coarse-graining of the UV code will preserve exact two-sided recovery.  In particular, in the second example the smallest allowed IR code will in fact have ${\cal H}_{\IR} ={\cal H}_{\UV}$.
We then close with some final discussion in section \ref{sec:disc}.

\subsection{Results and methods}
\label{sec:TechOverview}

As noted above, our goal is to understand how the structure of holographic codes evolves under bulk RG flows. In particular, we begin with a holographic code which we think of as being defined at some UV scale $\L_\UV$ in the bulk.  As mentioned earlier, we use the term `holographic code' to mean a quantum error correcting code with (for now, exact) two-sided recovery and flat entanglement spectrum.  The corresponding code space $\cH_\UV$ may be thought of as consisting of states in (an appropriate limit of) the bulk effective field theory at the scale $\L_\UV$.  We in principle allow this Hilbert space to contain states describing different semiclassical geometries.  Consistent with \cite{Harlow:2016vwg,Kamal:2019skn}, we take the UV Hilbert space to be written as a sum of tensor products
\be\la{huv}
\cH_\UV = \bigoplus_\a \cH_u^\a \otimes \cH_\ub^\a,
\ee
where $u$, $\ub$ correspond to the entanglement wedges of two complementary subregions $B$, $\Bb$ on the boundary.  More precisely, the algebras $\cA_u$, $\cA_\ub$ of operators in the entanglement wedges of $B$, $\Bb$ that act on $\cH_\UV$ are commutants of each other and define the decomposition \er{huv} with $\cA_u = \bigoplus_\a \cB(\cH_u^\a)$, $\cA_\ub = \bigoplus_\a \cB(\cH_\ub^\a)$, where $\cB(\cH_u^\a)$, $\cB(\cH_u^\a)$ are the algebras of all bounded operators on $\cH_u^\a$, $\cH_\ub^\a$, respectively. We require these Hilbert spaces to have finite dimension which, from a bulk point of view, means that they are defined by imposing both UV and IR cutoffs (and also cutoffs on the amplitude of bosonic excitations).

Since the code satisfies exact two-sided recovery, it follows from \cite{Harlow:2016vwg} that states in $\cH_\UV$ satisfy the quantum-corrected Ryu-Takayanagi (RT) formula, also known as the Faulkner-Lewkowycz-Maldacena formula (or QES formula) \cite{Faulkner:2013ana}
\be\label{flmuv}
S(\td\r_B) = \<A_\UV\>_\r + S(\r_u),
\ee
where $\r$ is a state (i.e., a density operator) on $\cH_\UV$ and $\r_u$ is the corresponding reduced state with respect to $\cA_u$ defined by the requirements that it is in $\cA_u$ (and may therefore be represented as a block diagonal matrix acting on $\cH_u:= \bigoplus_\a \cH_u^\a$) and reproduces the expectation values in $\r$ of all operators in $\cA_u$.
The notation $\td\r$ denotes the boundary CFT state encoded by $\rho$, and $\td\r_B$ is the corresponding reduced state on the boundary subregion $B$. Moreover, $A_\UV$ is an appropriate `geometric entropy' operator on ${\cal H}_{\UV}$ that lies in the center $\cZ_u$ of $\cA_u$. In Einstein gravity this $A_\UV$ would be $A/4G$ where $A$ is the geometric area.\footnote{This $A_\UV$ was called $\mathcal L$ in \cite{Harlow:2016vwg,Kamal:2019skn}, and in higher-derivative gravity it is sometimes called $\sigma$ \cite{Dong:2019piw,Dong:2025orj}.}  For simplicity, we will nevertheless refer to $A_\UV$ as the UV area operator below, using the terminology of Einstein gravity with the convention $G=1/4$.

The expectation value $\<A_\UV\>_\r$ in \er{flmuv} is defined as $\tr(\r A_\UV)$. The bulk von Neumann entropy $S(\r_u)$ is defined by $S(\r_u):=-\tr_u \left(\r_u \ln \r_u \right)$ with the trace defined by summing expectation values over an orthonormal basis in $\cH_u := \bigoplus_\a \cH_u^\a$.  This is a particular choice of trace (and the definition of entropy) on the UV algebra $\cA_u$, and any other choice differs only by a term that may be absorbed into the definition of $A_{\UV}$.  We will thus call \er{flmuv} the FLM formula in the UV.

Our goal will be to identify suitable subspaces of bulk states $\cH_\IR \subset \cH_\UV$  which will define our IR code.  In appropriate contexts one may think of $\cH_\IR$ as consisting of states in the bulk effective field theory at some IR scale $\L_\IR$. The main result of this paper is to give a general, explicit way of constructing such an $\cH_\IR$ so that it again satisfies an FLM formula (and its R\'enyi version) in the IR subject to the constraint that it contains a given set of UV states. In particular, our construction of $\cH_\IR$ as a subspace of $\cH_\UV$ will naturally define IR algebras $\cA_r$, $\cA_\rb$ of operators in the entanglement wedges of the boundary regions $B$, $\Bb$ that act within $\cH_\IR$.\footnote{We chose the names $r$, $\rb$ to suggest the IR, the names $u$, $\ub$ to suggest the UV, and the names $B$, $\Bb$ to suggest the boundary.} These algebras will be commutants of each other in $\cB(\cH_\IR)$ and will thus define a decomposition of $\cH_\IR$ analogous to \er{huv}:
\be\la{hir}
\cH_\IR = \bigoplus_\m \cH_r^\m \otimes \cH_\rb^\m,
\ee
with $\m$ labeling different IR superselection sectors,\footnote{We will construct these superselection sectors explicitly.  The IR superselection sectors $\m$ will generally differ from the UV superselection sectors $\a$. Indeed, as we will see, we can have multiple $\m$-sectors arise from a single $\a$-sector, or a set of $\a$-sectors can collapse to a single $\m$-sector.} and with $\cA_r = \bigoplus_\m \cB(\cH_r^\m)$, $\cA_\rb = \bigoplus_\m \cB(\cH_\rb^\m)$. Moreover, we will show that states $\r$ on $\cH_\IR$ satisfy an IR FLM formula
\be\la{flmir}
S(\td\r_B) = \<A_\IR\>_\r + S(\r_r),
\ee
with an explicitly constructed IR area operator $A_\IR$ in the center $\cZ_r$ of $\cA_r$. Here $\r_r$ is the IR reduced density operator defined with respect to $\cA_r$ and $S(\rho_r):= - \tr_r \left( \rho_r \ln \rho_r\right)$ is computed using the Hilbert space trace on $\cH_r := \bigoplus_\mu \cH_r^\mu$. As usual, we could alternatively use any other trace on $\cA_r$ by absorbing the difference into the definition of $A_{\IR}$.

Our strategy for deriving the IR FLM formula \er{flmir} is to show that the change of the bulk entropy under the bulk RG flow takes the form
\be
S(\r_u) = S(\r_r) + \<\D A\>_\r,
\ee
for any IR state $\r$ and some operator $\D A$. Such a change naturally `renormalizes' the IR area operator in the sense that we may simply define
\be
A_\IR = A_\UV + \D A, 
\ee
after which the IR FLM formula is equivalent to the UV FLM formula for the particular states in the IR code. Moreover, we will show that a certain notion of `subregion orthogonality' holds between different $\m$-sectors. This property will prove useful in \cite{JLMS}.

In order for our IR code to have an approximately flat entanglement spectrum in the sense of
\cite{Dong:2018seb,Akers:2018fow,Dong:2019piw},\footnote{\la{codeK}For a code with two-sided recovery, the bulk-to-boundary map may be viewed (in each bulk superselection sector) as taking a tensor product with some fixed $|\td\c\>$ state that lives in part of the boundary Hilbert space $\cH_\bdy$. The code has an approximately flat entanglement spectrum if the modular Hamiltonian of this $|\td\c\>$ is approximately a c-number. Since this modular Hamiltonian is fixed for a chosen code (and a superselection sector therein) and does not depend on the bulk state, we will call it the `code modular Hamiltonian.' Note that this $|\td\c\>$ is different from the $|\c\>$ states that are analyzed in sections \re{sec:onestate}-\re{sec:general} and which specify the IR-to-UV map.} our construction will also need to satisfy an approximate FLM-like formula for R\'enyi entropies in the IR.\footnote{Such a R\'enyi FLM formula holds exactly if and only if the code has a precisely flat entanglement spectrum \cite{Akers:2018fow}.} We will see that this is the case when such a formula holds in the original UV theory. This R\'enyi FLM formula is simplest to present for states $\r^\m$ in a single $\m$-sector of $\cH_\IR$ (which are analogous to fixed-area states \cite{Akers:2018fow,Dong:2018seb} in the IR). We will show that such states approximately satisfy
\be\la{rflmfixed}
S_n(\td\r_B^\m) = \<A_\IR\>_{\r^\m} + S_n(\r_r^\m),
\ee
for all $n\in{\mathbb C}$ such that the UV version holds,
and that more general IR states satisfy a slightly more complicated version \er{rflmir}, in agreement with the gravitational predictions of \cite{Akers:2018fow,Dong:2018seb}.

\section{A simple example of the IR code}
\label{sec:onestate}

Here we consider a simple example where we choose only a single seed $|\c\>$ that we require to be included in $\cH_\IR$.  
One may think of this example as modeling the case where we follow the bulk RG flow to the deep IR, so that almost all quantum fluctuations have been integrated out. As stated in the introduction, here and in section \ref{sec:general} we take the UV code to have exact two-sided recovery and an exactly-flat entanglement spectrum (so that the R\'enyi FLM formula holds exactly).  After doing so, it will be straightforward to return in section \ref{sec:disc} to cases where these properties hold only approximately in the UV and to incorporate the effects of such UV errors into our IR error bounds.

Our goal is to find what is, in some sense, the `smallest' $\cH_\IR$ that contains $|\c\>$ and satisfies an exact FLM formula \er{flmir} as well as an approximate version of its R\'enyi generalizations \er{rflmfixed}. It is worth noting that, in general, we cannot choose $\cH_\IR$ to be the one-dimensional Hilbert space containing only $|\c\>$. If we were to do so, we could still easily satisfy the FLM formula \er{flmir} by setting $\D A$ to the c-number $S(\c_u)$, where $\c_u$ is the reduced density operator of $|\c\>$.  However, we would then generally have difficulty satisfying the R\'enyi FLM formula \er{rflmfixed} since the entanglement spectrum of $\c_u$ generically fails to be approximately flat.

To proceed, we note that, as a state in $\cH_\UV = \bigoplus_\a \cH_u^\a \otimes \cH_\ub^\a$, our $|\c\>$ has a Schmidt decomposition of the form
\be\la{cschm}
|\c\> = \bigoplus_\a \sum_i c_{\a i} |i\>_u^\a |i\>_\ub^\a,
\ee
where $c_{\a i}\neq 0$ by convention (i.e., we keep only nonzero terms in the sum), and $\{|i\>_u^\a\}$, $\{|i\>_\ub^\a\}$ are orthonormal sets in $\cH_u^\a$, $\cH_\ub^\a$, respectively. The corresponding reduced density operator $\c_u$ in $\cA_u$ is
\be
\c_u = \bigoplus_\a \sum_i |c_{\a i}|^2 |i\>_u^\a \<i|_u^\a,
\ee
and we define the associated UV bulk modular Hamiltonian to be $K_{\c_u} := -\log \c_u$. On the other hand, since $A_\UV$ is an operator in the center $\cZ_u$ of $\cA_u$, it must be of the form
\be\la{auv}
A_\UV = \bigoplus_\a A_\UV^\a \mathbbm 1^\a,
\ee
where $A_\UV^\a$ is a c-number and $\mathbbm 1^\a$ is the identity operator on the $\a$-sector. Therefore, on the support of $\c_u$ (viewed as a subspace of $\cH_\UV$), we have
\be
\label{eq:lambdas}
A_\UV + K_{\c_u} = \bigoplus_\a \sum_i \l_{\a i} |i\>_u^\a \<i|_u^\a,
\ee
where 
\be\la{lidef}
\l_{\a i} := A_\UV^\a -\log |c_{\a i}|^2.
\ee
A helpful way of understanding these $\l_{\a i}$ is that they are the eigenvalues of the boundary modular Hamiltonian (on the region $B$) corresponding to the $|\c\>$ state.

We now chop the state $|\c\>$ into pieces by grouping these eigenvalues $\l_{\a i}$ into `bins' of some small width. In other words, we divide the entire range of $\l_{\a i}$ into intervals labeled by some index $\g$. Each bin corresponds to an interval $I_\g = [\l_\g-\e_\g/2, \l_\g+\e_\g/2)$ for the eigenvalues, and so defines a truncated state in which we  keep only those terms in \er{cschm} with $\l_{\a i}$ in the interval $I_\g$:
\be\la{cgdef}
|\c^\g\> := \fr{1}{N_\g} \bigoplus_\a \sum_{i: \l_{\a i} \in I_\g} c_{\a i} |i\>_u^\a |i\>_\ub^\a,
\ee
where $N_\g$ is a normalization constant satisfying
\be\la{ngdef}
|N_\g|^2 = \sum_\a \sum_{i: \l_{\a i} \in I_\g} |c_{\a i}|^2.
\ee
We then define the IR Hilbert space as the linear span of all the truncated states resulting from our chopping procedure:
\be
\cH_\IR = \operatorname{span} \{|\c^\g\>: \g \in J\},
\ee
where $J$ is the set of all indices $\g$. In the rest of the paper, sums over $\g$ are implicitly taken over all elements of $J$.

Let us verify that this $\cH_\IR$ achieves the goals described in the overview (section \ref{sec:TechOverview}). First, it clearly has the decomposition $\cH_\IR = \bigoplus_\m \cH_r^\m \otimes \cH_\rb^\m$ required by \er{hir}: we simply identify the index $\m$ with $\g$,
\be
\m \eq \g,
\ee
and define each $\cH_r^\m \otimes \cH_\rb^\m$ to be the 1-dimensional Hilbert space containing $|\c^\m\>$. The IR algebras $\cA_r = \bigoplus_\m \cB(\cH_r^\m)$, $\cA_\rb = \bigoplus_\m \cB(\cH_\rb^\m)$ are both equal to the center $\cZ_r$. All three algebras are generated by the (commuting) projections $\mathbbm 1^\m = |\c^\m\> \<\c^\m|$.

\subsection{FLM formula}

It remains to show that the FLM and R\'enyi FLM formulas hold in the IR with some appropriate $A_\IR$ and to some accuracy. We begin with the non-R\'enyi case, choosing to define our $A_\IR$ so that it holds exactly (when \eqref{flmuv} is exact in the UV).

Let us first consider the simple case where there is only one $\a$-sector in the UV. 
Since $\cH_r^\m$ is one-dimensional, the corresponding IR reduced density matrix $\rho_r^\mu$ is the identity (no additional normalization coefficient is required) and so has vanishing entropy.  For a state in a given $\m$-sector, the difference between the von Neumann entropies on $\cH_r^\m$ and $\cH_u$ is thus precisely the UV entanglement entropy $S(\chi_u^\gamma) =-\tr_u\left(\chi_u^\g \ln \chi_u^\g \right)$ of $|\c^\g\>$ (recall $\g=\m$). 
We may thus compensate for the change of the bulk entropy by `renormalizing' the area operator via
\be\la{aironea}
A_\IR := \bigoplus_\g \left[A_\UV^\a + S(\chi_u^\gamma)\right]{\mathbbm 1}^\gamma,
\ee
where $\a$ takes a unique value in the special case considered here and where ${\mathbbm 1}^\gamma$ is the identity on each one-dimensional $\gamma$-sector.

In the more general case with multiple $\a$-sectors in the UV, since we still have only one state $|\chi^\gamma\rangle$ in each $\gamma$-sector, we simply define
\be\la{airdef}
A_\IR := \bigoplus_\g A_\IR^\g {\mathbbm 1}^\gamma,\qu
A_\IR^\g := \langle \chi^\gamma |A_\UV|\chi^\gamma\rangle + S(\chi_u^\gamma)
\ee
Since a general IR state takes the form $|\psi\rangle = \sum \psi_\gamma |\chi^\gamma\rangle$, the corresponding density matrix on $\cH_r := \bigoplus_\gamma \cH^\gamma_r$ may be written
\begin{equation}
\rho_r = \bigoplus_\g p_\gamma {\mathbbm 1}^\gamma   
\end{equation}
with $p_\gamma = |\psi_\gamma|^2$.  Similarly, noting 
from \er{cgdef} that for all $a \in \cA_u$ we have
\begin{equation}
\langle \psi | a | \psi \rangle = \sum_\gamma p_\gamma \langle \chi^\g | a | \chi^\g \rangle,
\end{equation}
it follows that the UV density matrix $\rho_u$ takes the form
\begin{equation}
\label{eq:1ru}
\rho_u = \bigoplus_\g p_\g \chi_u^\g,
\end{equation}
where $\bigoplus$ is used instead of $\sum$ to emphasize that the Hermitian operators being summed over have non-overlapping support.
A short calculation then yields
\begin{equation}
\langle A_\UV\rangle_{\rho} +  S(\rho_u) = \langle A_\IR\rangle_{\rho} + S(\r_r)
\end{equation}
for any state $\r$ in $\cH_\IR$.
Combining this with \eqref{flmuv} then shows that \eqref{flmir} holds as desired.

\subsection{R\'enyi FLM formula}
\label{sec:RenyiFLM}

We now turn to our R\'enyi generalization of \eqref{flmir}.  From \cite{Akers:2018fow,Dong:2018seb}, we see that the correct generalization    in the UV is
\be\la{rflmuv}
e^{(1-n)S_n(\td\r_B)} = \sum_\a p_\a^n e^{(1-n)\[A_\UV^\a+ S_n(\r_u^\a)\]},
\ee
where $p_\a$, $\r_u^\a$ are defined by $\r_u =\bigoplus_\a p_\a \r_u^\a$ and $\tr \r_u^\a=1$. In the special case of a state $\r^\a$ in a single $\a$-sector (i.e., a fixed-area state in the UV), \er{rflmuv} reduces to the simple statement
\be
S_n(\td\r_B^\a) = A_\UV^\a+ S_n(\r_u^\a).
\ee

We now show that if the UV R\'enyi FLM formula \er{rflmuv} holds, the IR version
\be\la{rflmir}
e^{(1-n)S_n(\td\r_B)} = \sum_\m p_\m^n e^{(1-n)\[A_\IR^\m+ S_n(\r_r^\m)\]}
\ee
also holds approximately, where $p_\m$, $\r_r^\m$ are similarly defined by $\r_r =\bigoplus_\m p_\m \r_r^\m$ and $\tr \r_r^\m=1$ and where the error vanishes in the limit $\e_\g \rightarrow 0$. This will then immediately imply the simple statement \er{rflmfixed} for a state that lies in a single $\m$-sector.

To show the IR R\'enyi FLM formula \er{rflmir} from the UV version \er{rflmuv}, we aim to show that for any state $\r$ in $\cH_\IR$, we have
\begin{equation} 
\label{eq:1target}
\sum_\a p_\a^n e^{(1-n)\[A_\UV^\a+ S_n(\r_u^\a)\]} \approx \sum_\m p_\m^n e^{(1-n)\[A_\IR^\m+ S_n(\r_r^\m)\]}.
\end{equation}
Recalling that $S_n(\r) = \fr{1}{1-n}\log \tr \r^n$, we find \eqref{eq:1target} to be equivalent to
\begin{equation}
\la{Renyieq}
\sum_\a e^{(1-n) A_\UV^\a} \tr_u \(p_\a\r_u^\a\)^n \approx \sum_\m e^{(1-n) A_\IR^\m} \tr_r \(p_\m\r_r^\m\)^n.
\end{equation}

To establish this relation, we again write $\r_u$ as in \eqref{eq:1ru}.  It is then useful to observe that, since $\chi_u^\gamma$ is defined by requiring that
\begin{equation}
\langle \chi^\gamma|a |\chi^\gamma\rangle = \tr_u \( a \chi_u^\gamma \)
\end{equation}
for all $a\in \cA_u$, we may use \eqref{cgdef} to check that we have
\be
\label{eq:chimg}
\c^\g_u := \tr_\ub |\c^\g\>\<\c^{\g}| = \bigoplus_\a \sum_{i: \l_{\a i} \in I_\g} \fr{|c_{\a i}|^2}{|N_\g|^2} |i\>_u^\a \<i|_u^\a.
\ee
One may think of the relation \eqref{eq:chimg} as being obtained by tracing $|\chi^\g\rangle \langle\chi^\g|$ over the Hilbert space $\cH_{\bar u}:= \bigoplus_\alpha \cH^\a_{\bar u}$.
We may then use this result to write \eqref{eq:1ru} as
\begin{eqnarray}
\label{eq:1ru2}
\rho_u = \bigoplus_\gamma p_\gamma \chi_u^\gamma = \bigoplus_\g \bigoplus_\a   \sum_{i: \l_{\a i} \in I_\g} p_\g \fr{|c_{\a i}|^2}{|N_\g|^2} |i\>_u^\a \<i|_u^\a, 
\end{eqnarray}
Decomposing \eqref{eq:1ru2} into $\alpha$-blocks then yields
\begin{eqnarray}
\label{eq:1ru3}
p_\a \rho_u^\a = \sum_\g \sum_{i: \l_{\a i} \in I_\g} p_\g \fr{|c_{\a i}|^2}{|N_\g|^2} |i\>_u^\a \<i|_u^\a. 
\end{eqnarray}

Using \er{lidef} we find the following useful approximation for $A_\IR^\g$ defined in \er{airdef}:
\ba
A_\IR^\g &= \sum_\a \sum_{i: \l_{\a i} \in I_\g} \fr{|c_{\a i}|^2}{|N_\g|^2} \(A_\UV^\a - \log \fr{|c_{\a i}|^2}{|N_\g|^2}\) \la{airgexact}\\
&= \sum_\a \sum_{i: \l_{\a i} \in I_\g} \fr{|c_{\a i}|^2}{|N_\g|^2} \(\l_{\a i} + \log |N_\g|^2 \) \\
&= \sum_\a \sum_{i: \l_{\a i} \in I_\g} \fr{|c_{\a i}|^2}{|N_\g|^2} \(\l_\g + \log |N_\g|^2 +O(\e_\g)\) \\
&= \l_\g + \log |N_\g|^2 +O(\e_\g), \la{airgap}
\ea
where in the last step we used \er{ngdef}.

{\allowdisplaybreaks
Using \er{eq:1ru3} and \eqref{lidef}, we then simply compute \ba
& \sum_\a e^{(1-n) A_\UV^\a} \tr_u \(p_\a\r_u^\a\)^n \\
={}& \sum_\a e^{(1-n) A_\UV^\a} \sum_\g \sum_{i: \l_{\a i} \in I_\g} \(p_{\g} \fr{|c_{\a i}|^2}{|N_\g|^2}\)^n \\
={}& \sum_\g p_{\g}^n \sum_\a \sum_{i: \l_{\a i} \in I_\g} \fr{|c_{\a i}|^2}{|N_\g|^2} \(\fr{e^{-\l_{\a i}}}{|N_\g|^2}\)^{n-1} \\
={}& \sum_\g p_{\g}^n \sum_{\a} \sum_{i: \l_{\a i} \in I_\g} \fr{|c_{\a i}|^2}{|N_\g|^2} e^{(1-n)A_\IR^\g} e^{(n-1)\(A_\IR^\g -\l_{\a i} -\log |N_\g|^2\)} \\
={}& \sum_\g p_{\g}^n e^{(1-n)A_\IR^\g} \sum_{\a} \sum_{i: \l_{\a i} \in I_\g} \fr{|c_{\a i}|^2}{|N_\g|^2}  \[1+ (n-1)\(A_\IR^\g -\l_{\a i} -\log |N_\g|^2\) +O\(|n-1|^2 \e_\g^2\)\] \la{taylor}\\
={}& \sum_\g p_{\g}^n e^{(1-n)A_\IR^\g} \sum_{\a} \sum_{i: \l_{\a i} \in I_\g} \fr{|c_{\a i}|^2}{|N_\g|^2} \[1+ O\(|n-1|^2 \e_\g^2\)\] \\
={}& \sum_\m p_\m^n e^{(1-n)\[A_\IR^\m+ S_n(\r_r^\m)\]}\[1+ O\(|n-1|^2 \e_\m^2\)\],
\ea
where on the fifth line we kept the linear term exact and for higher-order terms we used $A_\IR^\g -\l_{\a i} -\log |N_\g|^2 = O(\e_\g)$ due to $\l_{\a i} = \l_\g +O(\e_\g)$ and \er{airgap}, on the sixth line we used \er{airgexact} and \er{ngdef} to remove the linear term, and on the final line we used \er{ngdef}, the identification $\m\eq\g$, and $S_n(\r_r^\m)=0$. Here we have kept $|n-1|^2$ in the error bound because at the moment we allow $n$ to be an arbitrary complex number; in particular, the relative error above is small as long as $|n-1|^2 \e_\m^2$ is small, even when $n$ itself is large.
}

This establishes \er{Renyieq}. The IR R\'enyi FLM formula \er{rflmir} then follows immediately from the UV version \er{rflmuv} up to relative errors of order $|n-1|^2 \e_\m^2$ in each $\m$ term.  So long as all window sizes $\e_\m$ are chosen to be upper bounded by some $\e$ (and $n$ is real), we find that \eqref{rflmir} holds up to an $O\(|n-1|^2 \e^2\)$ relative error. In such cases, upon taking the logarithm of \er{rflmir} we find the usual R\'enyi entropy $S_n(\td\r_B)$ up to an additive error of size\footnote{Note that this is not equivalent to $O\(|n-1| \e^2\)$; for example, the errors shown in \er{taylor} include cubic terms that after being multiplied by $1/|n-1|$ becomes $|n-1|^2 \e^3$ which is not $O\(|n-1| \e^2\)$ when e.g.\ we take $n \sim 1/\e^{5/3}$ as $\e\to0$.} $\fr{1}{|n-1|} \, O\(|n-1|^2 \e^2\)$. In particular, the error size vanishes in the $n\to1$ limit (for any $\e$), reproducing the exact IR FLM formula \er{flmir} that we derived in the previous subsection.

\section{A general construction}
\label{sec:general}

In the previous section, we considered the simple example of starting with a single state in the IR. We might call it the `seed' state which we used to construct $\cH_\IR$.

Now we provide a general construction of $\cH_\IR$ from an arbitrary set of seed states $|\y_k\>$ in $\cH_\UV$. 
As in section \ref{sec:onestate}, we take the FLM and R\'enyi FLM formulas to hold exactly in the UV.  Cases where these properties hold only approximately in the UV will be addressed in section \ref{sec:disc}.

The construction has two steps, which we now outline briefly:

\paragraph{Step 1:} Construct a $\cH_\IR^{\pre} \subset \cH_\UV$ that contains the seed states, has a decomposition
\be\la{hirpre}
\cH_\IR^\pre =\bigoplus_\b \cH_r^\b \otimes \cH_\rb^\b,
\ee
and satisfies ``complementary recovery'' with respect to $\cH_\UV$ in the sense that the algebra $\cA_r^\pre := \bigoplus_\b \cB(\cH_r^\b)$ can be recovered exactly from $\cA_u$ (meaning that every operator $\cO_r \in \cA_r^\pre$ is represented by a corresponding $\cO_u \in \cA_u$, i.e., $\cO_r |\y\>= \cO_u |\y\>$ and $\cO_r^\dag |\y\>= \cO_u^\dag |\y\>$ for all $|\y\>\in \cH_\IR^{\pre}$), and the commutant $\cA_\rb^\pre := \bigoplus_\b \cB(\cH_\rb^\b)$ can be recovered from $\cA_\ub$. Note that this complementary recovery is a property of the embedding of $\cH_\IR^{\pre}$ into $\cH_\UV$ and does not involve the boundary. In this section we will always use the term `complementary recovery' in this sense (or a similar sense with $\cH_\IR^{\pre}$ replaced by $\cH_\IR$ once we have constructed the latter), unless we state otherwise.

\paragraph{Step 2:} Construct $\cH_\IR$ as an extension of $\cH_\IR^{\pre}$ by including appropriate ``fixed-area'' truncations of states therein.

\bigskip
The above two steps will be described in detail in subsections \re{sec:genpre} and \re{sec:genchop}, respectively. We will then derive the FLM and R\'enyi FLM formulas in the IR in the remaining two subsections.

\subsection{Construction of $\cH_\IR^{\pre}$}
\la{sec:genpre}

This step was trivial in the simple example of a single seed state studied in the previous section, where $\cH_\IR^{\pre}$ is simply the one-dimensional Hilbert space containing the seed state.  In general, we are interested in starting with an arbitrary number of seed states $|\y_k\>$. Let $\cH^\seed$ be their linear span.

It might be tempting to define our desired $\cH_\IR^{\pre}$ to be simply $\cH^\seed$. However, we cannot generally do so because the embedding of an arbitrary subspace $\cH^\seed$ into $\cH_\UV$ does not necessarily satisfy complementary recovery. As a simple example where complementary recovery fails, let us take $\cH_u = \cH_\ub = \bC^2$ (one qubit) so that $\cH_\UV=\bC^2 \otimes \bC^2$ (two qubits).  We also take $\cH^\seed = \operatorname{span}\{|00\>, \fr{1}{\sqrt{2}}(|01\>+|10\>)\}$. It is then straightforward to confirm that there is no decomposition $\cH^\seed =\bigoplus_\b \cH_r^\b \otimes \cH_\rb^\b$ that satisfies complementary recovery.

Instead, we will construct $\cH_\IR^{\pre}$ as a suitable extension of $\cH^\seed$ by adding more states (in $\cH_\UV$). In fact, we will find the smallest such extension.

To do so, let us first understand the consequences of complementary recovery and, in particular, which states are required to be added into $\cH_\IR^{\pre}$. Using Theorem 1 of \cite{Kamal:2019skn} (which generalizes Theorem 5.1 of \cite{Harlow:2016vwg}),\footnote{We will henceforth assume $\cH_\UV$ is finite-dimensional as in \cite{Kamal:2019skn,Harlow:2016vwg}.} we find that complementary recovery of $\cH_\IR^{\pre}$ with the decomposition \er{hirpre} ensures that we have decompositions
\be\la{huub}
\cH_u^\a = \(\bigoplus_\b \cH_r^\b \otimes \cH_o^{\a\b}\) \oplus \cH_{u,\rest}^\a,\qu
\cH_\ub^\a = \(\bigoplus_\b \cH_\rb^\b \otimes \cH_\ob^{\a\b}\) \oplus \cH_{\ub,\rest}^\a,
\ee
such that for each $\b$ (taken from an index set chosen to be independent of $\a$), the spaces $\cH_o^{\a\b}$ and $\cH_\ob^{\a\b}$ both have nonzero dimensions for at least one $\a$.  The above theorems also guarantee the existence of  unitary transformations $U \in \cA_u$, $U' \in \cA_\ub$, and (for each $\b$) a state $|\c\>_{o\ob}^\b$ in
\be\la{hoob}
\cH_{o\ob}^\b := \bigoplus_\a \cH_o^{\a\b} \otimes \cH_\ob^{\a\b},
\ee
such that the embedding of $\cH_\IR^{\pre}$ into $\cH_\UV$ is given by
\be\la{embedu}
|ij\>_{r\rb}^\b \mapsto UU' |i\>_r^\b |\c\>_{o\ob}^\b |j\>_\rb^\b,
\ee
where $|i\>_r^\b$, $|j\>_\rb^\b$ denote orthonormal bases of $\cH_r^\b$, $\cH_\rb^\b$, respectively.\footnote{We chose the names $o$, $\ob$ to suggest that they are integrated out under the RG flow. Moreover, in \er{embedu} we suggestively put $|\c\>_{o\ob}^\b$ in the middle (corresponding to UV, or short-distance, degrees of freedom to be integrated out), separating $|i\>_r^\b$ and $|j\>_\rb^\b$ (corresponding to IR, or long-distance, degrees of freedom to be kept).} 

We can always absorb $U$, $U'$ by choosing the factors $\cH_r^\b$, $\cH_\rb^\b$, $\cH_o^{\a\b}$, and $\cH_\ob^{\a\b}$ appropriately in \er{huub}, and henceforth we will assume that we have done so, dropping $U$, $U'$ from \er{embedu} to instead write
\be\la{embed}
|ij\>_{r\rb}^\b \mapsto |i\>_r^\b |\c\>_{o\ob}^\b |j\>_\rb^\b.
\ee
An immediate consequence of \er{embed} is that, for any density operator $\r$ on $\cH_\IR^\pre$, its UV reduced state $\r_u$ is completely determined by its IR reduced state $\r_r = \bigoplus_\b p_\b \r_r^\b$ (with $\tr \r_r^\b=1$) as
\be\la{ruro}
\r_u = \bigoplus_\b p_\b \r_r^\b \otimes \c_o^\b,
\ee
where $\c_o^\b = \tr_\ob |\c\>_{o\ob}^\b \<\c|_{o\ob}^\b$ is the reduced density operator of $|\c\>_{o\ob}^\b$ with respect to the algebra $\cA_o^\b :=\bigoplus_\a \cB(\cH_o^{\a\b})$.

As a result, for any two density operators $\r$, $\check\r$ on $\cH_\IR^\pre$ for which the support\footnote{The support of an operator is defined as the orthogonal complement of its kernel. For a Hermitian operator it is the same as its range (or image).} $\supp(\check\r_u)$ of $\check\r_u$ contains $\supp(\r_u)$, we find
\be\la{rsinv}
\r_u \check\r_u^{-1} = \bigoplus_\b \[p_\b \r_r^\b (\check p_\b \check\r_r^\b)^{-1}\] \otimes P_{\supp(\c_o^\b)} ,
\ee
where $\check p_\b$ is defined by $\check\r_r = \bigoplus_\b \check p_\b \check\r_r^\b$, and the inverse of a Hermitian operator $\cO$ is defined to be the standard inverse on $\supp(\cO)$ and to annihilate its kernel,\footnote{Such an inverse is a bounded operator because $\cH_\UV$ is finite-dimensional.} so that $\cO^{-1} \cO = \cO \cO^{-1} = P_{\supp(\cO)}$ with $P_{\supp(\cO)}$ the (orthogonal) projection onto $\supp(\cO)$.
The point is that \er{rsinv} always acts within (i.e., preserves) $\cH_\IR^\pre$. This provides a simple way of verifying that in our previous two-qubit example, $\cH^\seed$ does not satisfy complementary recovery: taking $\r$, $\check\r$ to be the density operators for $\fr{1}{\sqrt{3}}(|00\>+|01\>+|10\>)$, $\fr{1}{\sqrt{2}}(|01\>+|10\>)$, respectively, we find $\r_u \check\r_u^{-1} = \fr{2}{3} \begin{psmallmatrix} 2 & 1\\ 1 & 1 \end{psmallmatrix}$ which clearly does not preserve $\cH^\seed$.

Our general strategy is then to enlarge $\cH^\seed$ by adding the images of such $\r_u \s_u^{-1}$.  Concretely, we start with $\cH^\seed$ and define $\s$ to be the maximally mixed state on $\cH^\seed$. We then `bootstrap' the following subalgebra of $\cA_u$:
\be\la{aseed}
\cA_u^\seed := \vN \lt\{\r_u \s_u^{-1}: \forall \text{ density operator $\r$ on $\cH^\seed$}\rt\},
\ee
where $\vN(S)$ denotes the von Neumann algebra generated\footnote{Note that a von Neumann algebra always contains the identity operator (here $\mathbbm 1_u$) and is closed under linear combination, multiplication, and Hermitian conjugation. For \er{aseed}, it is sufficient to use a smaller set of generators $\{\r_u \s_u^{-1}: \r=|\y_k\>\<\y_{k'}|\}$ where $\{|\y_k\>\}$ is a basis of $\cH^\seed$.} by a set $S$, and $\s$ being maximally mixed guarantees $\supp(\r_u) \subset \supp(\s_u)$, because every such $\r$ can be written as a mixture of pure states $|\y_k\>\<\y_k|$, each of which has a reduced state on $u$ that is supported within $\supp(\s_u)$. Now we enlarge $\cH^\seed$ by defining
\be\la{enlarged}
\cH^{\text{enlarged}} := \cA_u^\seed \cdot \cH^\seed,
\ee
consisting of all states obtained by operators in $\cA_u^\seed$ acting on $\cH^\seed$.

In principle, we should repeat the procedure in the previous paragraph by using $\cH^{\text{enlarged}}$ as the new $\cH^{\seed}$, bootstrapping a potentially larger $\cA_u^\seed$ via \er{aseed}, and constructing a potentially larger $\cH^{\text{enlarged}}$ via \er{enlarged}. We will iterate these steps until $\cH^{\text{enlarged}}$ stops growing (as it must terminate within a finite number of iterations, because $\cH_\UV$ is finite-dimensional),\footnote{In practice, we have not found any example where the second iteration gives something larger. It would be useful to prove that no such example exists.} and let $\cH_\IR^\pre$ be the final $\cH^{\text{enlarged}}$ and $\cA_u^\pre$ be the final $\cA_u^\seed$.\footnote{Alternatively, we can avoid such iterations by directly defining $\cA_u^\pre := \vN \{\r_u^{(1)}\cd \r_u^{(m)} \s_u^{-m}: \forall \text{ $m\geq 1$ density operators $\r^{(1)}, \cd, \r^{(m)}$ on $\cH^\seed$}\}$ and $\cH_\IR^\pre := \cA_u^\pre \cdot \cH^\seed$. It is straightforward to show that this definition is equivalent to the iterative construction above.}

Therefore, \er{aseed} becomes
\be\la{aupre}
\cA_u^\pre := \vN \lt\{\r_u \s_u^{-1}: \forall \text{ density operator $\r$ on $\cH_\IR^\pre$}\rt\},
\ee
where $\s$ is now the maximally mixed state on $\cH_\IR^\pre$, and \er{enlarged} becomes $\cH_\IR^\pre = \cA_u^\pre \cdot \cH_\IR^\pre$ meaning that $\cA_u^\pre$ acts within $\cH_\IR^\pre$. Note that operators in $\cA_u^\pre$ are defined to act on $\cH_\UV$. Therefore we define
\be\la{arpre}
\cA_r^\pre := \cA_u^\pre \big|_{\cH_\IR^\pre},
\ee
consisting of the restriction of every operator in $\cA_u^\pre$ to $\cH_\IR^\pre$. Thus, $\cA_r^\pre$ is a subalgebra of $\cB(\cH_\IR^\pre)$ and we use it to define the decomposition \er{hirpre} with $\cA_r^\pre = \bigoplus_\b \cB(\cH_r^\b)$, $\cA_\rb^\pre = \bigoplus_\b \cB(\cH_\rb^\b)$.

A useful property to be used momentarily is that, for any $\cO_u \in \cA_u^\pre$, we have
\be\la{faithful}
\cO_u \big|_{\cH_\IR^\pre}=0 \qu \Rightarrow \qu \cO_u P_{\supp(\s_u)} =0.
\ee
In other words, if $\cO_u$ annihilates $\cH_\IR^\pre$, it must annihilate $\supp(\s_u)$. To show this, note that our maximally-mixed state $\s$ is a state on $\cH_\IR^\pre$, so such an $\cO_u$ must satisfy $\cO_u \s=0$. But then $\tr_\ub \cO_u \s = \cO_u \s_u$ must also vanish. Therefore $\cO_u \s_u \s_u^{-1} = \cO_u P_{\supp(\s_u)}$ vanishes, where we used $\s_u^{-1} \s_u =P_{\supp(\s_u)}$. This shows \er{faithful}.

The following theorem guarantees that the construction above satisfies complementary recovery. Moreover, the construction gives the smallest such $\cH_\IR^\pre$ that contains the seed states because we only added states that are required to be included by complementary recovery.

\begin{theorem}
The construction above satisfies complementary recovery: $\cA_r^\pre$ can be recovered from $\cA_u$ and $\cA_\rb^\pre$ can be recovered from $\cA_\ub$.
\end{theorem}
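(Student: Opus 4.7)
The plan is to dispose of the theorem's two assertions by very different means. The first, recovery of $\cA_r^\pre$ from $\cA_u$, is essentially tautological: each generator $\r_u\s_u^{-1}$ of $\cA_u^\pre$ in \er{aupre} lies in $\cA_u$ (since both factors do), so $\cA_u^\pre\subset\cA_u$. Thus by \er{arpre}, every $\cO_r\in\cA_r^\pre$ has a representative $\cO_u\in\cA_u^\pre\subset\cA_u$, and Hermitian-closure of $\cA_u^\pre$ handles $\cO_r^\dag$.

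For the second assertion, recovery of $\cA_\rb^\pre$ from $\cA_\ub$, my plan is to first establish the full structural form \er{huub}--\er{embed} for the embedding $\cH_\IR^\pre\hookrightarrow\cH_\UV$ and then read off the $\ub$-side representatives from it. First, applying the structure theorem for finite-dimensional von Neumann subalgebras of $\cA_u$ together with the faithfulness property \er{faithful} (which also yields $\cA_r^\pre\cong\cA_u^\pre|_{\supp(\s_u)}$ as von Neumann algebras) produces a decomposition
\be
\supp(\s_u)\cap\cH_u^\a = \bigoplus_\b \cH_r^\b\otimes\cH_o^{\a\b},\qu \cA_u^\pre\big|_{\supp(\s_u)} = \bigoplus_\b \cB(\cH_r^\b)\otimes 1_o^{\a\b}.
\ee
Second, the inequality $|\y\>\<\y|\le(\dim\cH_\IR^\pre)\,\s$ for every $|\y\>\in\cH_\IR^\pre$ yields $\cH_\IR^\pre\subset\supp(\s_u)\otimes\cH_\ub$, and combined with $\cA_u^\pre$-invariance and the irreducibility of $\cB(\cH_r^\b)$ on $\cH_r^\b$, this forces the fibered form $\cH_\IR^\pre=\bigoplus_\b\cH_r^\b\otimes W^\b$ for some $W^\b\subset\bigoplus_\a\cH_o^{\a\b}\otimes\cH_\ub^\a$.

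The hard part, which I expect to be the main obstacle, is to show that each $W^\b$ factorizes further as $W^\b=|\c\>_{o\ob}^\b\otimes\cH_\rb^\b$ inside a companion decomposition $\cH_\ub^\a=\bigl(\bigoplus_\b\cH_\rb^\b\otimes\cH_\ob^{\a\b}\bigr)\oplus\cH_{\ub,\rest}^\a$, which is the content of \er{embed}. Here I would exploit the minimality of the iterative construction together with the explicit description of $\cA_u^\pre$ as generated by operators $\r_u^{(1)}\cd\r_u^{(m)}\s_u^{-m}$: such operators act nontrivially only on the $\cH_r^\b$ factor of each $\b$-block, leaving the $\cH_o^{\a\b}\otimes\cH_\ub^\a$ content invariant. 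The termination of the iteration (so that $\cA_u^\pre\cdot\cH_\IR^\pre=\cH_\IR^\pre$) then rigidly forces the $o\ub$-content within each $\b$-block to collapse onto a single vector $|\c\>_{o\ob}^\b$, fixing both the product form of $W^\b$ and the companion decomposition of $\cH_\ub^\a$.

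Once \er{embed} is established, recovery of $\cA_\rb^\pre$ from $\cA_\ub$ is direct: for any $\cO_\rb=\bigoplus_\b 1_r^\b\otimes\td\cO_\rb^\b\in\cA_\rb^\pre$, the element $\cO_\ub:=\bigoplus_\a\bigl[\bigoplus_\b\td\cO_\rb^\b\otimes 1_\ob^{\a\b}\bigr]\oplus 0\in\cA_\ub$ coincides with $\cO_\rb$ on $\cH_\IR^\pre$ by virtue of \er{embed}, with the same construction applied to $\cO_\rb^\dag$ handling the conjugate.
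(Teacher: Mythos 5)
Your first assertion (recoverability of $\cA_r^\pre$ from $\cA_u$) matches the paper's argument and is fine. For the second assertion, however, your plan hinges on directly establishing the product structure \er{embed}, and the step you yourself flag as ``the hard part'' is a genuine gap rather than a proof. The claim that termination of the iteration ``rigidly forces the $o\ub$-content within each $\b$-block to collapse onto a single vector'' does not follow from the two ingredients you cite. The condition $\cA_u^\pre\cdot\cH_\IR^\pre=\cH_\IR^\pre$ together with the fact that $\cA_u^\pre$ acts as $\bigoplus_\b\cB(\cH_r^\b)\otimes 1$ only gives you $\cH_\IR^\pre=\bigoplus_\b\cH_r^\b\otimes W^\b$ (your Step 2); it places no further constraint on $W^\b$, since \emph{any} subspace $W^\b\subset\bigoplus_\a\cH_o^{\a\b}\otimes\cH_\ub^\a$ is compatible with that invariance. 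What actually forces the collapse is the \emph{other} half of the termination condition, namely \er{aupre}: for every density operator $\r$ on the final $\cH_\IR^\pre$ one has $\r_u\s_u^{-1}\in\cA_u^\pre=\bigoplus_\b\cB(\cH_r^\b)\otimes 1_o$, hence $\r_u=(\r_u\s_u^{-1})\s_u$ has, in each $\b$-block, the form $(\cdot)_r\otimes\hat\s_o^\b$ with a $\r$-independent factor on $o$. Applying this to $\r=|i\>_r^\b\<i|\otimes|w\>\<w|$ for unit vectors $|w\>\in W^\b$ shows that all states in $W^\b$ share a common marginal on $o$, and one then needs the standard (Knill--Laflamme/decoupling-type) lemma that a subspace of $\cH_{o}\otimes\cH_\ub$ all of whose unit vectors have the same reduced state on $o$ must take the form $|\c\>_{o\ob}^\b\otimes\cH_\rb^\b$ inside a suitable factorization of a subspace of $\cH_\ub$. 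Neither of these two steps appears in your sketch, and without them the factorization of $W^\b$ is asserted rather than derived.

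It is also worth noting that the paper avoids this route entirely. Rather than constructing \er{huub}--\er{embed} by hand, it verifies the abstract criterion of Theorem 1(iii) of \cite{Kamal:2019skn} ($P_{\cH_\IR^\pre}X_uP_{\cH_\IR^\pre}\in\cA_r^\pre$ for all $X_u\in\cA_u$) by showing that every unitary in $\cA_\rb^\pre$ preserves the reduced state on $u$ of every state in $\cH_\IR^\pre$; the key inputs are $\r_u=\cO_u\s_u$ with $\cO_u\in\cA_u^\pre$ (equation \er{ru2}, which is exactly the consequence of \er{aupre} missing from your argument), the invertibility of $\s_r$, and the faithfulness property \er{faithful}. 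The structure theorem \er{embed} is then obtained as a \emph{corollary} of complementary recovery via the cited theorems, not as an intermediate step. Your direction of attack could be made to work, but it requires proving the structure theorem from scratch, which is strictly more work than the paper's argument and is precisely the part you have left open.
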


\begin{proof}
The recoverability of $\cA_r^\pre$ from $\cA_u$ follows immediately from the definition \er{arpre} and $\cA_u^\pre \subset \cA_u$, as \er{arpre} ensures that every $\cO_r \in \cA_r^\pre$ can be written as $\cO_u |_{\cH_\IR^\pre}$ with some $\cO_u \in \cA_u^\pre \subset \cA_u$, and is therefore represented by that $\cO_u$.

To prove the recoverability of $\cA_\rb^\pre$ from $\cA_\ub$, we first note that we need only show that our setup satisfies statement (iii) in Theorem 1 of \cite{Kamal:2019skn}, which says that for any operator $X_u \in \cA_u$ we have $P_{\cH_\IR^\pre} X_u P_{\cH_\IR^\pre} \in \cA_r^\pre$, with $P_{\cH_\IR^\pre}$ the projection onto $\cH_\IR^\pre$. To prove this statement, it is then sufficient to verify that $[P_{\cH_\IR^\pre} X_u P_{\cH_\IR^\pre}, \cO_\rb] =0$ for any $\cO_\rb \in \cA_\rb^\pre$. Moreover, it is enough to establish this for any Hermitian $\cO_\rb \in \cA_\rb^\pre$, as any operator is a (complex) linear combination of two Hermitian operators.

We will proceed by showing that any unitary $U_\rb \in \cA_\rb^\pre$ acting on any state $|\y\> \in \cH_\IR^\pre$ must preserve the reduced state on $u$:
\be\la{preserve}
\tr_\ub \(U_\rb |\y\>\<\y| U_\rb^\dag\) = \tr_\ub |\y\>\<\y|.
\ee
To see that \er{preserve} is sufficient, take $U_\rb = e^{i \l \cO_\rb}$ with $\l$ an arbitrary real number. Taking the expectation value of any $X_u \in \cA_u$ using both sides of \er{preserve}, we find
\be
\<\y| e^{-i\l \cO_\rb} X_u e^{i\l \cO_\rb} |\y\> = \<\y| X_u |\y\>,
\ee
which, upon expanding to linear order in $\l$, becomes
\be
\<\y| [X_u, \cO_\rb] |\y\> =0.
\ee
Since $|\y\>$ is an arbitrary state in $\cH_\IR^\pre$, this ensures $[P_{\cH_\IR^\pre} X_u P_{\cH_\IR^\pre}, \cO_\rb] =0$.

It only remains to prove \er{preserve}. As a first step, we claim that every density operator $\r$ on $\cH_\IR^\pre$ has a reduced state on $u$ of the form
\be\la{ru2}
\r_u = \cO_u \s_u, \qu
\cO_u \in \cA_u^\pre.
\ee
To see that this is so, recall that \er{aupre} requires $\r_u \s_u^{-1}$ to be some $\cO_u \in \cA_u^\pre$. Multiplying both by $\s_u$ on the right and using $\s_u^{-1} \s_u =P_{\supp(\s_u)}$, we find
\be
\r_u P_{\supp(\s_u)} = \cO_u \s_u,
\ee
which is equivalent to \er{ru2} because $\supp(\r_u) \subset \supp(\s_u)$.

Next, we show further that \er{ru2} requires $\r_r$ to be
\be\la{rror}
\r_r = \cO_r \s_r, \qu
\cO_r := \cO_u \big|_{\cH_\IR^\pre}.
\ee
This can be established by recalling that every $X_r \in \cA_r$ is represented by some $X_u \in \cA_u^\pre \subset \cA_u$. Using \er{ru2}, we then find
\bm
\tr_{r} X_r \r_r = \tr_{r\rb} X_r \r = \tr_{u\ub} X_u \r = \tr_{u} X_u \r_u = \tr_{u} X_u \cO_u \s_u = \tr_{u\ub} X_u \cO_u \s \\= \tr_{r\rb} X_r \cO_r \s = \tr_{r} X_r \cO_r \s_r,
\em
where in passing to the second line we used that $X_r$ and $\cO_r$ are represented by $X_u$, $\cO_u$, respectively, and that $\s$ is a state on $\cH_\IR^\pre$. Since $X_r$ is an arbitrary operator in $\cA_r$, we obtain \er{rror} as desired.

Let us now apply \er{ru2}, \er{rror} to $\r=|\y\> \<\y|$, $\td\r= U_\rb |\y\>\<\y| U_\rb^\dag$ in order to write
\ba
&\r_u = \cO_u \s_u, \qu
\td\r_u = \td\cO_u \s_u, \qu
\cO_u, \td\cO_u \in \cA_u^\pre,\\
&\r_r = \cO_r \s_r, \qu
\td\r_r = \td\cO_r \s_r, \qu
\cO_r := \cO_u \big|_{\cH_\IR^\pre}, \qu
\td\cO_r := \td\cO_u \big|_{\cH_\IR^\pre}.
\ea
Since $U_\rb\in \cA_\rb^\pre$ is unitary, it cannot change the reduced state on $r$:
\be
\r_r = \td\r_r \qu\Rightarrow\qu
\cO_r \s_r = \td\cO_r \s_r \qu\Rightarrow\qu
\cO_r = \td\cO_r,
\ee
where in the second step we used $\s_r$ is invertible (as it is the maximally mixed state in $\cA_r^\pre$). Using \er{faithful}, we find
\be
\cO_r = \td\cO_r \qu\Rightarrow\qu
(\cO_u - \td\cO_u) P_{\supp(\s_u)}=0 \qu\Rightarrow\qu
\cO_u \s_u = \td\cO_u \s_u \qu\Rightarrow\qu
\r_u = \td\r_u
\ee
which establishes the advertised result \er{preserve}.
\end{proof}

An intuitive way of understanding the above complementary recovery property is as follows. The recoverability of $\cA_r^\pre$ from $\cA_u$ is automatic: $\cA_r^\pre$ is defined to contain only information that can be recovered from $\cA_u$ and in this sense is not too large. The recoverability of $\cA_\rb^\pre$ from $\cA_\ub$ is a statement that $\cA_r^\pre$ is not too small -- \er{aupre} ensures that it contains $\r_u \s_u^{-1}$ (some measure of differences in reduced states) for all states $\r$ in $\cH_\IR^\pre$ and thus has all the information that can be recovered from $\cA_u$ for those states.

As discussed earlier in this subsection, from complementary recovery we are guaranteed that we can decompose $\cH_u^\a$, $\cH_\ub^\a$ as in \er{huub}, and that there exists a state $|\c\>_{o\ob} \in \cH_{o\ob}^\b$ for each $\b$ such that the embedding of $\cH_\IR^{\pre}$ into $\cH_\UV$ is given by \er{embed}.

According to Theorem 2 of \cite{Kamal:2019skn} (which generalizes Theorem 1.1 of \cite{Harlow:2016vwg}), the complementary recovery property shown above immediately leads to (and is in fact equivalent to) an FLM formula\footnote{This is an FLM formula for the bulk entropy in the UV, not the boundary entropy.} with some ``area operator'' (which we might call $\D A^\pre$) in the center of $\cA_r^\pre$:
\be\la{flmur}
S(\r_u) = \<\D A^\pre\>_\r + S(\r_r^\pre)
\ee
as well as a version with $u$, $r$ replaced by $\ub$, $\rb$, respectively. Here $\r$ is an arbitrary state on $\cH_\IR^\pre$ and $\r_r^\pre$ is its reduced density operator with respect to $\cA_r^\pre$.

For our purpose of deriving the IR FLM formula \er{flmir}, we will need to show a slight variant of \er{flmur}:
\be\la{sgeneqpre}
\<A_\UV\>_\r + S(\r_u) = \<A_\IR^\pre\>_\r + S(\r_r^\pre),
\ee
with $A_\IR^\pre$ some area operator in the center of $\cA_r^\pre$.
It is tempting to define $A_\IR^\pre$ as $A_\UV + \D A^\pre$, but the problem is that $A_\UV$ is not generally an operator in $\cA_r^\pre$. This will be remedied shortly by Theorem~\re{flmthm}, which shows that \er{sgeneqpre} holds with the following $A_\IR^\pre$:
\be
A_\IR^\pre := \bigoplus_\b A_\IR^{\pre,\b} \mathbbm 1^\b,\qu
A_\IR^{\pre,\b} = S(\c_o^\b) + \prescript{\b}{o}{\<}\c|A_\UV|\c\>_o^\b.
\ee

Although \er{sgeneqpre} leads to our desired IR FLM formula \er{flmir}, its R\'enyi version \er{rflmfixed} does not generally hold in $\cH_\IR^\pre$ since this space need not have a basis of approximately-fixed UV-area states characterized by approximately flat entanglement spectra. In the following subsection, we will therefore extend $\cH_\IR^\pre$ to a larger Hilbert space $\cH_\IR$ that has such an (approximate) fixed-UV-area basis and which satisfies the R\'enyi FLM formula in the IR.

Before proceeding, we emphasize that the entire purpose of this subsection was to find an appropriate extension $\cH_\IR^\pre$ of the span of the seed states such that $\cH_\IR^\pre$ satisfies complementary recovery and whose embedding in $\cH_\UV$ thus takes the form \er{embed}. In doing so, we made technical assumptions such as the finite dimensionality of $\cH_\UV$. However, even in cases where such assumptions do not hold, one may bypass the procedure in this subsection and proceed to the next subsection so long as one can directly identify some $\cH_\IR^\pre$ whose embedding in $\cH_\UV$ takes the form \er{embed}.

\subsection{Construction of $\cH_\IR$}
\la{sec:genchop}

We now construct the desired extension $\cH_\IR$ of $\cH_\IR^\pre$, generalizing the procedure used in Section~\re{sec:onestate}.

Let us first note that, for each $\b$, we have the decomposition \er{hoob},
\be\la{hoob2}
\cH_{o\ob}^\b = \bigoplus_\a \cH_o^{\a\b} \otimes \cH_\ob^{\a\b},
\ee
and a state $|\c\>_{o\ob}^\b \in \cH_{o\ob}^\b$ with a Schmidt decomposition of the form
\be\la{cschmb}
|\c\>_{o\ob}^\b = \bigoplus_\a \sum_i c_{\a i}^\b |i\>_o^{\a\b} |i\>_\ob^{\a\b},
\ee
where $c_{\a i}^\b\neq 0$ by convention (i.e., we only keep nonzero terms in the sum), and where $\{|i\>_o^{\a\b}\}$, $\{|i\>_\ob^{\a\b}\}$ are orthonormal sets in $\cH_o^{\a\b}$, $\cH_\ob^{\a\b}$, respectively. The reduced density operator $\c_o^\b$ in $\cA_o^\b$ of the above state is
\be
\c_o^\b = \bigoplus_\a \sum_i |c_{\a i}^\b|^2 \, |i\>_o^{\a\b} \<i|_\ob^{\a\b},
\ee
and we define the corresponding modular Hamiltonian as $K_{\c_o^\b} := -\log \c_o^\b$.

Recall that $A_\UV$ is a central operator of the form \er{auv}:
\be\la{auv2}
A_\UV = \bigoplus_\a A_\UV^\a \mathbbm 1^\a,
\ee
where $\mathbbm 1^\a$ is the identity operator on the $\a$-sector, defined previously as the subspace $\cH_u^\a \otimes \cH_\ub^\a$ of $\cH_\UV$. However, $\mathbbm 1^\a$ can also be viewed as the identity on the subspace $\cH_o^{\a\b} \otimes \cH_\ob^{\a\b}$ of $\cH_{o\ob}^\b$, and hence $A_\UV$ can also be viewed as an operator on $\cH_{o\ob}^\b$. We will use this convention below.

As a result, on the support of $\c_o^\b$ (viewed as a subspace of $\cH_{o\ob}^\b$), we have
\be
A_\UV + K_{\c_o^\b} = \bigoplus_\a \sum_i \l_{\a i}^\b |i\>_o^{\a\b} \<i|_\ob^{\a\b},
\ee
where, much as in section \ref{sec:onestate}, we have introduced
\be\la{libdef}
\l_{\a i}^\b := A_\UV^\a -\log |c_{\a i}^\b|^2.
\ee
A helpful way of understanding these $\l_{\a i}$ is that they are the eigenvalues of the `code modular Hamiltonian' (as defined in footnote~\re{codeK}) for the code that maps $\cH_\IR^\pre$ to the boundary.

For each $\b$, we now chop the states in 
$\cH_\IR^\pre$ into pieces by grouping the eigenvalues $\l_{\a i}^\b$ into `bins' of some small width. In other words, we divide the entire range of $\l_{\a i}^\b$ into intervals labeled by some index $\g$. Each bin corresponds to an interval $I_\g^\b = [\l_\g^\b-\e_\g^\b/2, \l_\g^\b+\e_\g^\b/2)$ for the eigenvalues, and defines a truncated state by only keeping those terms in \er{cschmb} with $\l_{\a i}^\b$ in the interval $I_\g^\b$:
\be\la{cgbdef}
|\c^\g\>_{o\ob}^\b := \fr{1}{N_\g^\b} \bigoplus_\a \sum_{i: \l_{\a i}^\b \in I_\g^\b} c_{\a i}^\b |i\>_o^{\a\b} |i\>_\ob^{\a\b},
\ee
where $N_\g^\b$ is a normalization constant satisfying
\be\la{ngbdef}
|N_\g^\b|^2 = \sum_\a \sum_{i: \l_{\a i}^\b \in I_\g^\b} |c_{\a i}^\b|^2.
\ee
We then define the IR Hilbert space as the linear span of the pieces defined by the above chopping procedure:
\be\la{hirdef}
\cH_\IR = \operatorname{span} \{|i\>_r^\b |\c^\g\>_{o\ob}^\b |j\>_\rb^\b: \g \in J^\b,\, |i\>_r^\b \in \cH_r^\b,\, |j\>_\rb^\b \in \cH_\rb^\b\},
\ee
where $J^\b$ is the set of all indices $\g$ for a given $\b$. In the rest of the paper, sums over $\g$ are implicitly over all elements of $J^\b$ (where $\b$ would be clear from the context). Comparing \er{hirdef} with \er{embed}, we see that $\cH_\IR$ is indeed an extension of $\cH_\IR^\pre$.

Our next step is to verify that this $\cH_\IR$ achieves the goals described in the Introduction. First, by construction it can be decomposed as
\be
\cH_\IR = \bigoplus_{\b,\g} \cH_r^{\b\g} \otimes \cH_\rb^{\b\g},
\ee
where $\cH_r^{\b\g}$ ($\cH_\rb^{\b\g}$) is isomorphic to $\cH_r^\b$ ($\cH_\rb^\b$) for all values of $\g$. This decomposition obviously reproduces our desired \er{hir} saying $\cH_\IR = \bigoplus_\m \cH_r^\m \otimes \cH_\rb^\m$: we simply identify $\m$ with the pair $(\b,\g)$,
\be
\m \eq (\b,\g).
\ee
The IR algebras $\cA_r = \bigoplus_\m \cB(\cH_r^\m)$, $\cA_\rb = \bigoplus_\m \cB(\cH_\rb^\m)$ are commutants of each other, and their intersection is the center $\cZ_r$ generated by $\mathbbm 1^\m$.

We now show that the embedding of $\cH_\IR$ into $\cH_\UV$ satisfies complementary recovery. A simple way to see this is to verify statement (i) in Theorem 1 of \cite{Kamal:2019skn} for both $u$ and $\bar u$. It is sufficient to check that we can decompose
\be\la{huubm}
\cH_u^\a = \(\bigoplus_\m \cH_r^\m \otimes \cH_o^{\a\m}\) \oplus \cH_{u,\rest}^\a,\qu
\cH_\ub^\a = \(\bigoplus_\m \cH_\rb^\m \otimes \cH_\ob^{\a\m}\) \oplus \cH_{\ub,\rest}^\a,
\ee
such that for each $\m$, the spaces $\cH_o^{\a\m}$ and $\cH_\ob^{\a\m}$ both have nonzero dimensions for at least one $\a$, and to show that (for each $\m$) there is a state $|\c\>_{o\ob}^\m$ in
\be\la{hoobm}
\cH_{o\ob}^\m := \bigoplus_\a \cH_o^{\a\m} \otimes \cH_\ob^{\a\m},
\ee
such that the embedding of $\cH_\IR$ into $\cH_\UV$ is given by
\be\la{embedm}
|ij\>_{r\rb}^\m \mapsto |i\>_r^\m |\c\>_{o\ob}^\m |j\>_\rb^\m,
\ee
where $|i\>_r^\m$, $|j\>_\rb^\m$ denote orthonormal bases of $\cH_r^\m$, $\cH_\rb^\m$, respectively. It is straightforward to see that this statement follows from \er{huub}--\er{embed}, together with the way we refined $\b$-sectors into $\m$-sectors. In particular, \er{huubm} follows from \er{huub} with $\cH_o^{\a\m}\eq \cH_o^{\a\b\g}$ defined as the intersection of $\cH_o^{\a\b}$ and the support of the reduced state $\c_o^\m \eq \c_o^{\b\g} = \tr_\ob |\c^\g\>_{o\ob}^\b \<\c^\g|_{o\ob}^\b$ (and similarly for $\cH_\ob^{\a\m}$), the state $|\c\>_{o\ob}^\m$ is simply $|\c^\g\>_{o\ob}^\b$, and the basis $|i\>_r^\m$ is simply the basis $|i\>_r^\b$ of $\cH_r^\b$ because $\cH_r^{\m}$ is isomorphic to $\cH_r^\b$ for all values of $\g$ (and similarly for $|j\>_\rb^\m$).

As before, an immediate consequence of \er{embedm} is that for any density operator $\r$ on $\cH_\IR$, its UV reduced state $\r_u$ is completely determined by its IR reduced state
\be\la{rmdecom}
\r_r = \bigoplus_\m p_\m \r_r^\m
\ee
as
\be\la{rurm}
\r_u = \bigoplus_\m p_\m \r_r^\m \otimes \c_o^\m,
\ee
where $\tr \r_r^\m=1$, and again $\c_o^\m$ is the reduced density operator of $|\c\>_{o\ob}^\m$ with respect to the algebra $\cA_o^\m :=\bigoplus_\a \cB(\cH_o^{\a\m})$. In particular, this implies that a certain notion of `subregion orthogonality' is preserved by the IR-to-UV map: two states $\r^{(1)}$, $\r^{(2)}$ in $\cH_\IR$ are orthogonal on $r$ (i.e., $\r^{(1)}_r \r^{(2)}_r=0$) if and only if they are orthogonal on $u$ (i.e., $\r^{(1)}_u \r^{(2)}_u=0$); a similar statement holds for the complementary regions. Therefore, our construction ensures that, as long as the UV-to-boundary map preserves subregion orthogonality (i.e., orthogonality on $u$ is equivalent to orthogonality on the boundary region $B$, and similarly for the complementary regions), the IR code will inherit this property (i.e., orthogonality on $r$ will be equivalent to orthogonality on $B$, and similarly for the complementary regions). A special case of this property is that IR states from different $\m$-sectors -- which are automatically orthogonal on $r$ and on $\rb$ -- are mapped to boundary states that are orthogonal on boundary subsystems $B$ and $\Bb$; this will be called `boundary subsystem orthogonality' in \cite{JLMS} and will prove useful therein.

Complementary recovery of the IR-to-UV map follows manifestly from \er{embedm}: $\cA_r$ ($\cA_\rb$) can be recovered from $\cA_u$ ($\cA_\ub$) as the restriction of an appropriate subalgebra to $\cH_\IR$.

\subsection{FLM formula}
\la{sec:genflm}

According to Theorem 2 of \cite{Kamal:2019skn}, complementary recovery leads to (and is in fact equivalent to) an FLM formula with some ``area operator'' (which we might call $\D A$) in the center of $\cA_r$:
\be\la{flmur2}
S(\r_u) = \<\D A\>_\r + S(\r_r),\qqu
S(\r_\ub) = \<\D A\>_\r + S(\r_\rb),
\ee
where $\r$ is an arbitrary state on $\cH_\IR$ and $\r_r$ is its reduced density operator with respect to $\cA_r$. By comparing \er{rmdecom} with \er{rurm}, we find an explicit expression for $\D A$:
\be\la{dadef}
\D A := \bigoplus_\m \D A^\m \mathbbm 1^\m,\qqu
\D A^\m := S(\c_o^\m).
\ee

For our purpose of deriving the IR FLM formula \er{flmir}, we will need to show a slight variant of \er{flmur2}. We will do so by proving the following theorem.

\begin{theorem}\la{flmthm}
Let $\cH_\UV$ be a finite-dimensional Hilbert space and $\cH_\IR$ be a subspace. Let $\cA_u$, $\cA_r$ be von Neumann algebras on $\cH_\UV$, $\cH_\IR$, respectively, and $A_\UV$ be an operator in the center of $\cA_u$. Then complementary recovery (i.e., $\cA_r$ and its commutant $\cA_\rb$ can be recovered from $\cA_u$ and its commutant $\cA_\ub$, respectively) is equivalent to the existence of an operator $A_\IR$ in the center of $\cA_r$ such that
\be\la{sgeneq2}
\<A_\UV\>_\r + S(\r_u) = \<A_\IR\>_\r + S(\r_r),\qqu
\<A_\UV\>_\r + S(\r_\ub) = \<A_\IR\>_\r + S(\r_\rb)
\ee
for any density operator $\r$ on $\cH_\IR$.
\end{theorem}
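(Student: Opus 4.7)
The plan is to prove the two directions separately, with the $\Rightarrow$ direction providing the explicit IR area operator needed for \er{flmir} and serving as the main content; the $\Leftarrow$ direction will reduce to Theorem 2 of \cite{Kamal:2019skn}.

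For the $\Rightarrow$ direction, I would assume complementary recovery and invoke the embedding structure \er{embedm} together with the resulting UV reduction formula \er{rurm}, $\r_u = \bigoplus_\m p_\m \r_r^\m \otimes \c_o^\m$, while $\r_r = \bigoplus_\m p_\m \r_r^\m$. Since $\r_u$ is block-diagonal in $\m$ with each block a tensor product, the block and tensor additivity of the von Neumann entropy give
\be
S(\r_u) - S(\r_r) = \sum_\m p_\m S(\c_o^\m).
\ee
Next, since $A_\UV = \bigoplus_\a A_\UV^\a \mathbbm 1^\a$ is a scalar on each $\a$-sector and $\cH_o^\m$ decomposes over $\a$ via \er{huubm}, a short calculation yields
\be
\<A_\UV\>_\r = \sum_\m p_\m \tr_o\(\c_o^\m A_\UV\),
\ee
where $A_\UV$ is viewed as an operator on $\cH_o^\m$ via the convention explained below \er{auv2}. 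Defining
\be
A_\IR := \bigoplus_\m A_\IR^\m \mathbbm 1^\m, \qqu A_\IR^\m := \tr_o\(\c_o^\m A_\UV\) + S(\c_o^\m),
\ee
which is manifestly in the center of $\cA_r$, equation \er{sgeneq2} follows immediately by substitution. The $\ub,\rb$ variant follows by the identical argument with $o,\ob$ interchanged, using $S(\c_\ob^\m) = S(\c_o^\m)$ for the pure state $|\c\>_{o\ob}^\m$ and the fact that the $A_\UV$ expectation in that pure state is the same whether computed via $o$ or $\ob$.

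For the $\Leftarrow$ direction, I would rewrite each assumed equation in the form $S(\r_u) = S(\r_r) + \<\D A\>_\r$ (and similarly for $\ub,\rb$), where $\D A := A_\IR - P_{\cH_\IR} A_\UV P_{\cH_\IR}$ is a Hermitian operator on $\cH_\IR$. By Theorem 2 of \cite{Kamal:2019skn}, the resulting pair of entropy identities implies complementary recovery \emph{provided} $\D A$ lies in the center of $\cA_r$. Since $A_\IR$ does so by hypothesis, this reduces to showing $P_{\cH_\IR} A_\UV P_{\cH_\IR}$ is in the center of $\cA_r$. I expect this last step to be the main obstacle. My plan is to restrict to states $\r$ supported in a single sector and to exploit the rigidity of the equation: the right-hand sides $\<A_\IR\>_\r + S(\r_r)$ and $\<A_\IR\>_\r + S(\r_\rb)$ are tightly constrained as $\r$ varies while $p_\m$ is held fixed, and combined with the $\ub,\rb$ counterpart this should force $\<P_{\cH_\IR} A_\UV P_{\cH_\IR}\>_\r$ to be a function of the sector probabilities only, which by polarization promotes to the desired centrality.
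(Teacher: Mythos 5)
Your forward direction ($\Rightarrow$) is essentially the paper's proof: you use the embedding structure \er{embedm} and the reduction formula \er{rurm} to get $S(\r_u)-S(\r_r)=\sum_\m p_\m S(\c_o^\m)$ and $\<A_\UV\>_\r=\sum_\m p_\m \tr_o(\c_o^\m A_\UV)$, and your $A_\IR^\m=\tr_o(\c_o^\m A_\UV)+S(\c_o^\m)$ is exactly the paper's operator. (The paper quotes Theorem 2 of \cite{Kamal:2019skn} for the entropy shift rather than computing it from block and tensor additivity, but it identifies the same $\D A^\m=S(\c_o^\m)$, so this is cosmetic.) This half is correct.

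The reverse direction has a genuine gap. Your plan is to set $\D A:=A_\IR-P_{\cH_\IR}A_\UV P_{\cH_\IR}$ and invoke Theorem 2 of \cite{Kamal:2019skn}, which requires $\D A$ to lie in the center of $\cA_r$; you correctly flag the centrality of $P_{\cH_\IR}A_\UV P_{\cH_\IR}$ as the obstacle, but the sketched ``rigidity plus polarization'' argument does not close it. Since $A_\IR$ is already central, showing that $\<P_{\cH_\IR}A_\UV P_{\cH_\IR}\>_\r$ depends only on the sector probabilities $p_\m$ is equivalent to showing that $S(\r_u)-S(\r_r)$ depends only on the $p_\m$ -- and that statement is itself essentially the assertion that an FLM formula holds with a \emph{central} area operator, i.e., it is the content of complementary recovery that you are trying to prove. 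Neither $S(\r_r)$ nor $S(\r_u)$ is separately a function of the $p_\m$ as $\r$ varies within fixed sectors, so there is no tight constraint to exploit without effectively reproving the Harlow/Kamal--Penington theorem from scratch. The paper sidesteps the centrality question entirely: it perturbs \er{sgeneq2} in $\r$ to obtain the operator identity $P_{\cH_\IR}(A_\UV+K_{\r_u})P_{\cH_\IR}=A_\IR+K_{\r_r}$ (and its barred version), combines this with \er{sgeneq2} evaluated on a second state to obtain $S(\r_u|\s_u)=S(\r_r|\s_r)$ and $S(\r_\ub|\s_\ub)=S(\r_\rb|\s_\rb)$ -- relative entropies in which both area operators cancel, central or not -- and then invokes the relative-entropy characterization of complementary recovery from Theorem 2 of \cite{Kamal:2019skn}. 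You should replace your centrality step with this cancellation argument.
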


\begin{proof}
\textbf{Complementary recovery $\Rightarrow$ \er{sgeneq2}:} To prove this, recall that according to Theorem 2 of \cite{Kamal:2019skn}, complementary recovery leads to \er{flmur2}. To derive \er{sgeneq2} from \er{flmur2}, we need only establish the existence of an operator $\td A_\UV$ in the center of $\cA_r$ such that for any density operator $\r$ on $\cH_\IR$, we have
\be\la{auvtd}
\<A_\UV\>_\r = \<\td A_\UV\>_\r.
\ee
Our desired result will then follow immediately from the definition
\be\la{airdef2}
A_\IR := \td A_\UV + \D A.
\ee
To establish the existence of such an $\td A_\UV$, first recall that, according to Theorem 1 of \cite{Kamal:2019skn}, complementary recovery guarantees that we can decompose $\cH_u^\a$, $\cH_\ub^\a$ as in \er{huubm}, and that there exists a state $|\c\>_{o\ob}^\m \in \cH_{o\ob}^\m$ for each $\m$ such that the embedding of $\cH_\IR$ into $\cH_\UV$ is given by \er{embedm}. Thus, we define
\be\la{auvtddef}
\td A_\UV := \bigoplus_\m \td A_\UV^\m \mathbbm 1^\m,\qqu
\td A_\UV^\m := \prescript{\m}{o\ob}{\<}\c|A_\UV|\c\>_{o\ob}^\m,
\ee
where $\prescript{\m}{o\ob}{\<}\c|A_\UV|\c\>_{o\ob}^\m$ is defined by decomposing $|\c\>_{o\ob}^\m$ into $\a$-sectors according to $\cH_{o\ob}^\m = \bigoplus_\a \cH_o^{\a\m} \otimes \cH_\ob^{\a\m}$ and using the fact that $A_\UV$ is a c-number on each $\a$-sector. This $\td A_\UV$ is manifestly in the center of $\cA_r$. Finally, we verify \er{auvtd} for any $\r$ on $\cH_\IR$:
\ba
& \<\td A_\UV\>_\r = \tr_{r\rb} (\r \td A_\UV) = \tr_r (\r_r \td A_\UV) = \tr_r \[\(\bigoplus_\m p_\m \r_r^\m \) \td A_\UV\] = \sum_\m p_\m \td A_\UV^\m \\
=& \sum_\m p_\m \prescript{\m}{o\ob}{\<}\c|A_\UV|\c\>_{o\ob}^\m = \sum_\m p_\m \tr_o (\c_o^\m A_\UV) = \tr_u \[\(\bigoplus_\m p_\m \r_r^\m \otimes \c_o^\m\) A_\UV\] \\
=& \tr_{u} (\r_u A_\UV) = \tr_{u\ub} (\r A_\UV) = \<A_\UV\>_\r,
\ea
where we used \er{rmdecom} in the first line and \er{rurm} in going to the last line. Thus, we have shown \er{sgeneq2} with an explicit construction \er{airdef2} of $A_\IR$.

\textbf{\er{sgeneq2} $\Rightarrow$ complementary recovery:} We will be brief in proving this direction as it is not essential for the purposes of this paper. We use the same method as in proving Theorem 2 of \cite{Kamal:2019skn}: varying \er{sgeneq2} under an infinitesimal perturbation $\d\r$, we find a variant of the JLMS formula for the modular Hamiltonians $K_{\r_u}:= -\log \r_u$, $K_{\r_r}:= -\log \r_r$:
\be
P_{\cH_\IR} (A_\UV + K_{\r_u}) P_{\cH_\IR} = A_\IR + K_{\r_r},
\ee
as well as a version with $u$, $r$ replaced by $\ub$, $\rb$, respectively. Here $P_{\cH_\IR}$ is the projection onto $\cH_\IR$. From this, we find the JLMS formula for the relative entropies:
\be
S(\r_u|\s_u)= S(\r_r|\s_r),\qqu
S(\r_\ub|\s_\ub)= S(\r_\rb|\s_\rb),
\ee
where $\r$, $\s$ are two arbitrary density operators on $\cH_\IR$. This leads to (and is in fact equivalent to) complementary recovery, according to Theorem 2 of \cite{Kamal:2019skn}.
\end{proof}

The FLM-like formula \er{sgeneq2} immediately implies that our desired IR FLM formula \er{flmir} holds to the extent that the UV version \er{flmuv} holds. In particular, if the UV FLM formula holds within some error bar $\e_{FLM}$, the IR version holds within the same error bar.

In the rest of this subsection, let us derive a few useful results on $A_\IR$. First, by combining \er{dadef}, \er{airdef2}, and \er{auvtddef}, we find a more direct expression for $A_\IR$:
\be
A_\IR := \bigoplus_\m A_\IR^\m \mathbbm 1^\m,\qqu
A_\IR^\m := S(\c_o^\m) + \prescript{\m}{o\ob}{\<}\c|A_\UV|\c\>_{o\ob}^\m.
\ee
More explicitly, we may use $|\c\>_{o\ob}^\m = |\c^\g\>_{o\ob}^\b$ and \er{cgbdef} to write $A_\IR^\m$ as
\be
A_\IR^\m =A_\IR^{\b\g} = \sum_\a \sum_{i: \l_{\a i}^\b \in I_\g^\b} \fr{|c_{\a i}^\b|^2}{|N_\g^\b|^2} \(A_\UV^\a - \log \fr{|c_{\a i}^\b|^2}{|N_\g^\b|^2} \).
\ee
Recalling that the eigenvalues $\l_{\a i}^\b$ defined by \er{libdef} are approximately equal to $\l_\g^\b$ within each interval $I_\g^\b$, with an error no greater than $\e_\g^\b/2$, we find an approximate but simpler expression for $A_\IR^\m$:
\ba
A_\IR^\m &= \sum_\a \sum_{i: \l_{\a i}^\b \in I_\g^\b} \fr{|c_{\a i}^\b|^2}{|N_\g^\b|^2} \(\l_{\a i}^\b + \log |N_\g^\b|^2 \) \la{airmexact}\\
&= \sum_\a \sum_{i: \l_{\a i}^\b \in I_\g^\b} \fr{|c_{\a i}^\b|^2}{|N_\g^\b|^2} \(\l_\g^\b + \log |N_\g^\b|^2 +O(\e_\m)\) \\
&= \l_\g^\b + \log |N_\g^\b|^2 +O(\e_\m) \la{airm} \\
&= \log \(\sum_\a e^{A_\UV^\a} D_\m^\a\) +O(\e_\m),
\ea
where $\e_\m:=\e_\g^\b$ is a simpler notation that we will use from now on, and on the last line we used \er{libdef} and \er{ngbdef} to find
\ba
|N_\g^\b|^2 &= \sum_\a \sum_{i: \l_{\a i}^\b \in I_\g^\b} e^{A_\UV^\a - \l_{\a i}^\b} \\
&= \sum_\a \sum_{i: \l_{\a i}^\b \in I_\g^\b} e^{A_\UV^\a - \l_\g^\b + O(\e_\m)} = \(e^{-\l_\g^\b} \sum_\a e^{A_\UV^\a} D_\m^\a\) \[1+ O(\e_\m)\], \la{ngbap}
\ea
with $D_\m^\a$ defined as the number of terms in the sum over $i$ in \er{ngbap}, for any given $\a$ and $\m=(\b,\g)$.

\subsection{R\'enyi FLM formula}

Recall that the R\'enyi FLM formula in the UV takes the general form \er{rflmuv}, which we repeat here for the convenience of the reader:
\be\la{rflmuv2}
e^{(1-n)S_n(\td\r_B)} = \sum_\a p_\a^n e^{(1-n)\[A_\UV^\a+ S_n(\r_u^\a)\]},
\ee
where again $p_\a$, $\r_u^\a$ are defined by $\r_u =\bigoplus_\a p_\a \r_u^\a$ and $\tr \r_u^\a=1$.  As noted earlier, this section assumes \er{rflmuv2} to be exact in the UV code.

We now show that the IR R\'enyi FLM formula \er{rflmir} holds approximately for any IR state $\r$:
\be\la{rflmir2}
e^{(1-n)S_n(\td\r_B)} = \sum_\m p_\m^n e^{(1-n)\[A_\IR^\m+ S_n(\r_r^\m)\]} \[1+ O\(|n-1|^2\e_\m^2\)\].
\ee
Here we used the simpler notation $\e_\m:=\e_\g^\b$, and $p_\m$, $\r_r^\m$ are again defined by $\r_r =\bigoplus_\m p_\m \r_r^\m$ and $\tr \r_r^\m=1$.

To proceed, we wish to show that for any state $\r$ in $\cH_\IR$,
\be
\sum_\a p_\a^n e^{(1-n)\[A_\UV^\a+ S_n(\r_u^\a)\]} = \sum_\m p_\m^n e^{(1-n)\[A_\IR^\m+ S_n(\r_r^\m)\]} \[1+ O\(|n-1|^2\e_\m^2\)\],
\ee
which is equivalent to
\be\la{renyieq2}
\sum_\a e^{(1-n) A_\UV^\a} \tr_u \(p_\a\r_u^\a\)^n = \sum_\m e^{(1-n) A_\IR^\m} \tr_r \(p_\m\r_r^\m\)^n \[1+ O\(|n-1|^2\e_\m^2\)\].
\ee

To establish \er{renyieq2}, we first note that the density operator $\c_o^\m$ as an element of $\cA_o^\m :=\bigoplus_\a \cB(\cH_o^{\a\m})$ can be decomposed as $\c_o^\m = \bigoplus_\a \hat\c_o^{\a\m}$, where $\hat\c_o^{\a\m}$ does not generally have trace $1$. Thus for any state $\r$ in $\cH_\IR$, its UV reduced state \er{rurm} can be written as
\be
\r_u = \bigoplus_{\a,\m} p_\m \r_r^\m \otimes \hat\c_o^{\a\m}.
\ee
Comparing this with $\r_u =\bigoplus_\a p_\a \r_u^\a$, we find
\be
p_\a \r_u^\a = \bigoplus_\m p_\m \r_r^\m \otimes \hat\c_o^{\a\m}.
\ee
As a result, we find an exact equality:
\ba
\sum_\a e^{(1-n) A_\UV^\a} \tr_u \(p_\a\r_u^\a\)^n &= \sum_\a e^{(1-n) A_\UV^\a} \tr_u \(\bigoplus_\m p_\m \r_r^\m \otimes \hat\c_o^{\a\m}\)^n \\
&= \sum_{\a,\m} e^{(1-n) A_\UV^\a} \tr_o \(\hat\c_o^{\a\m}\)^n \tr_r \(p_\m\r_r^\m\)^n.
\ea

{\allowdisplaybreaks
To complete the derivation of \er{renyieq2}, we need only show
\be
\sum_{\a} e^{(1-n) A_\UV^\a} \tr_o \(\hat\c_o^{\a\m}\)^n = e^{(1-n) A_\IR^\m} \[1+ O\(|n-1|^2\e_\m^2\)\].
\ee
We proceed by using \er{cgbdef} to find an explicit expression for $\hat\c_o^{\a\m}$:
\be
\hat\c_o^{\a\m} = \hat\c_o^{\a\b\g} = \fr{1}{|N_\g^\b|^2} \sum_{i: \l_{\a i}^\b \in I_\g^\b} |c_{\a i}^\b|^2 \, |i\>_o^{\a\b} \<i|_o^{\a\b}.
\ee
Using this result, we write
\ba
& \sum_{\a} e^{(1-n) A_\UV^\a} \tr_o \(\hat\c_o^{\a\m}\)^n \\
={}& \sum_{\a} \sum_{i: \l_{\a i}^\b \in I_\g^\b} e^{(1-n) A_\UV^\a} \(\fr{|c_{\a i}^\b|^2}{|N_\g^\b|^2}\)^n \\
={}& \sum_{\a} \sum_{i: \l_{\a i}^\b \in I_\g^\b} \fr{|c_{\a i}^\b|^2}{|N_\g^\b|^2} \(\fr{e^{-\l_{\a i}^\b}}{|N_\g^\b|^2}\)^{n-1} \\
={}& \sum_{\a} \sum_{i: \l_{\a i}^\b \in I_\g^\b} \fr{|c_{\a i}^\b|^2}{|N_\g^\b|^2} e^{(1-n)A_\IR^\m} e^{(n-1)\(A_\IR^\m -\l_{\a i}^\b -\log |N_\g^\b|^2\)} \\
={}& e^{(1-n)A_\IR^\m} \sum_{\a} \sum_{i: \l_{\a i}^\b \in I_\g^\b} \fr{|c_{\a i}^\b|^2}{|N_\g^\b|^2} \[1+ (n-1)\(A_\IR^\m -\l_{\a i}^\b -\log |N_\g^\b|^2\) +O\(|n-1|^2 \e_\m^2\)\] \\
={}& e^{(1-n)A_\IR^\m} \sum_{\a} \sum_{i: \l_{\a i}^\b \in I_\g^\b} \fr{|c_{\a i}^\b|^2}{|N_\g^\b|^2} \[1+ O\(|n-1|^2 \e_\m^2\)\] \\
={}& e^{(1-n)A_\IR^\m} \[1+ O\(|n-1|^2 \e_\m^2\)\],
\ea
where on the third line we used \er{libdef}, on the fifth line we kept the linear term exact and for higher-order terms we used $A_\IR^\m -\l_{\a i}^\b -\log |N_\g^\b|^2 = O(\e_\m)$ due to $\l_{\a i}^\b = \l_\g^\b +O(\e_\m)$ and \er{airm}, on the sixth line we used \er{airmexact} and \er{ngbdef} to remove the linear term, and on the final line we used \er{ngbdef} again. Here we have kept $|n-1|^2$ in the error bound because at the moment we allow $n$ to be an arbitrary complex number.
}

We have thus established \er{renyieq2}. Our desired IR R\'enyi FLM formula \er{rflmir2} then holds to the extent that the UV version \er{rflmuv2} holds (i.e., the IR formula inherits any additional errors in the UV formula if it is approximate). The relative errors shown in \er{rflmir2} are of order $|n-1|^2 \e_\m^2$ for each $\m$ term; these errors to the R\'enyi FLM were introduced because the entanglement spectra within the eigenvalue windows are only approximately flat. So long as all window sizes $\e_\m$ are chosen to be upper bounded by some $\e$ (and $n$ is real\footnote{For complex $n$, there can be cancellations among the formally-leading terms for different values of $\mu$ so that the final relative error is larger.  However, such cancellations require fine tuning. The relative error will thus remain $O\(|n-1|^2 \e^2\)$ for generic cases; e.g., in the limit $\epsilon \rightarrow 0$ with the IR state held fixed and with $n$ fixed to a generic complex value.}), we find that \eqref{rflmir2} holds up to an $O\(|n-1|^2 \e^2\)$ relative error. In such cases, upon taking the logarithm of \er{rflmir2} we find the usual R\'enyi entropy $S_n(\td\r_B)$ up to an additive error of size $\fr{1}{|n-1|}\, O\(|n-1|^2 \e^2\)$.
As in section \re{sec:onestate}, the error size vanishes in the $n\to1$ limit (for any $\e$), reproducing the exact IR FLM formula \er{flmir} that we derived in the previous subsection.

\section{Examples}
\label{sec:ex}

Here we apply the general construction of section \re{sec:general} to two examples. They are more nontrivial than the single-seed-state example of section \re{sec:onestate} and in particular will demonstrate cases where Step 1 of the construction is nontrivial and produces an $\cH^\pre_\IR$ that is strictly larger than $\cH^\seed$.

Before proceeding, we prove the following lemma about $\cA^\seed_u$ to simplify some of our analysis later.

\begin{lemma}\la{aseedlemma}
Assume as before that $\cH_\UV$ is finite-dimensional. For any two density operators $\r$, $\check{\r}$ on $\cH^\seed$ with $\supp(\r_u) \subset \supp(\check{\r}_u)$, the algebra $\cA^\seed_u$ defined by \er{aseed} contains $\r_u \check{\r}_u^{-1}$. The same statement holds for $\r=|\y_1\>\<\y_2|$ with any two potentially distinct states $|\y_1\>$, $|\y_2\>$ in $\cH^\seed$ (but where $\check{\r}$ remains Hermitian).
Here again the inverse of a Hermitian operator $\cO$ is defined to be the standard inverse on $\supp(\cO)$ and to annihilate its kernel, so that $\cO^{-1} \cO = \cO \cO^{-1} = P_{\supp(\cO)}$.
\end{lemma}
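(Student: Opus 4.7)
The plan is to prove the Hermitian case by a regularize-and-limit argument, and then reduce the non-Hermitian case $\r=|\y_1\>\<\y_2|$ to it by polarization. The argument uses three standard properties of $\cA^\seed_u$ as a finite-dimensional von Neumann algebra on $\cH_u$: (i) norm-closedness, (ii) closure under inverses of $\cB(\cH_u)$-invertible elements, and (iii) closure under Borel functional calculus of self-adjoint elements, so in particular the range projection of any element lies inside. Setting $P := \s_u\s_u^{-1} = P_{\supp(\s_u)}$, the definition \er{aseed} with $\r=\s$ gives $P \in \cA^\seed_u$ and hence $\mathbbm 1_u - P \in \cA^\seed_u$.

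First I would regularize by setting $\check\r_\e := (1-\e)\check\r + \e\s$ on $\cH^\seed$, so that $(\check\r_\e)_u$ is strictly positive on $P\cH_u$ for every $\e > 0$. Then $Z_\e := (\check\r_\e)_u\s_u^{-1} \in \cA^\seed_u$, and $Y_\e := Z_\e + (\mathbbm 1_u - P) \in \cA^\seed_u$ is invertible in $\cB(\cH_u)$ (acting as $Z_\e$ on $P\cH_u$ and as the identity on its complement), so fact (ii) gives $Y_\e^{-1} \in \cA^\seed_u$. Subtracting $\mathbbm 1_u - P$ and left-multiplying by $\r_u\s_u^{-1}$ yields $L_\e := \r_u(\check\r_\e)_u^{-1} \in \cA^\seed_u$ for every $\e > 0$.

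Next I would take $Q := P_{\supp(\check\r_u)}$, which coincides with the range projection of $\check\r_u\s_u^{-1} \in \cA^\seed_u$, so by (iii) $Q \in \cA^\seed_u$ and $L_\e Q \in \cA^\seed_u$ for every $\e$. A short calculation in the decomposition $\cH_u = Q\cH_u \oplus (P-Q)\cH_u \oplus (\mathbbm 1_u - P)\cH_u$ shows $\lim_{\e \to 0} L_\e Q = \r_u\check\r_u^{-1}$: right multiplication by $Q$ restricts $(\check\r_\e)_u^{-1}$ to inputs in $Q\cH_u$, where it has a well-defined finite limit (obtained via the Schur complement of the $(P-Q)$-block of $\s_u$), and left multiplication by $\r_u$, which vanishes off $Q\cH_u$ by the support hypothesis, extracts exactly the $Q\!\to\!Q$ part and identifies it with $\r_u\check\r_u^{-1}$. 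Fact (i) then puts $\r_u\check\r_u^{-1} \in \cA^\seed_u$.

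For the non-Hermitian case $\r=|\y_1\>\<\y_2|$, the polarization identity $4|\y_1\>\<\y_2| = \sum_{k=0}^3 i^{-k}|\p_k\>\<\p_k|$ with $|\p_k\> := |\y_1\> + i^k|\y_2\> \in \cH^\seed$ reduces the problem to pure seed states; under the natural reading of the support hypothesis as $|\y_i\> \in \supp(\check\r_u)\otimes\cH_\ub$, each $|\p_k\>$ inherits the same property, so $|\p_k\>\<\p_k|$ is a positive multiple of a density operator on $\cH^\seed$ whose reduced state on $u$ is supported in $\supp(\check\r_u)$, and the Hermitian case gives $(|\p_k\>\<\p_k|)_u\check\r_u^{-1} \in \cA^\seed_u$; linearity then finishes the job. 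The main obstacle is verifying the limit in the third paragraph, since $(\check\r_\e)_u^{-1}$ itself diverges like $1/\e$ on $(P-Q)\cH_u$-inputs, and it is essential that both the support condition $\supp(\r_u)\subset \supp(\check\r_u)$ and the right multiplication by $Q$ conspire to cancel the divergent and off-diagonal contributions so that only the clean $Q\!\to\!Q$ piece survives.
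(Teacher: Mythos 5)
Your proof is correct, but it takes a genuinely different route from the paper's. The paper argues via the von Neumann bicommutant theorem: it shows that $\r_u \check{\r}_u^{-1}$ commutes with every $\cO$ in the commutant ${\cA^\seed_u}'$ through a chain of identities that repeatedly inserts $\s_u^{-1}\s_u = P_{\supp(\s_u)}$ and $\check{\r}_u^{-1}\check{\r}_u = P_{\supp(\check{\r}_u)}$ and uses that $\cO$ commutes with the generators $\r_u\s_u^{-1}$, $\check{\r}_u\s_u^{-1}$ and with the range projection $P_{\supp(\check{\r}_u)}$; this handles the Hermitian and non-Hermitian cases uniformly, since the only properties of $\r$ it uses are $\r_u\s_u^{-1}\in\cA^\seed_u$ and $\r_u P_{\supp(\check{\r}_u)}=\r_u$. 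You instead exhibit $\r_u\check{\r}_u^{-1}$ as an explicit norm limit of elements built from the generators, using regularization, inverse-closedness, and norm-closedness of the finite-dimensional algebra; the one genuinely delicate step is the $\e\to0$ limit, and your account of it is right (the $1/\e$ divergence of $(\check{\r}_\e)_u^{-1}$ sits in the $(P-Q)$-to-$(P-Q)$ block and is removed by the right factor of $Q$, while left multiplication by $\r_u$ removes the output component in $(P-Q)\cH_u$, leaving exactly the $(1,1)$ block $[(1-\e)\check{\r}_u + \e\,(\text{Schur complement of }\s_u)]^{-1}\to\check{\r}_u^{-1}$). Your approach is more constructive and avoids invoking the bicommutant theorem, at the cost of the regularization bookkeeping and a separate polarization step; both proofs lean on finite-dimensionality and on range projections of algebra elements lying in the algebra. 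Two small points to fix: with $|\p_k\> = |\y_1\> + i^k|\y_2\>$ the polarization identity is $4|\y_1\>\<\y_2| = \sum_{k=0}^3 i^{k}|\p_k\>\<\p_k|$ — your coefficient $i^{-k}$ produces $4|\y_2\>\<\y_1|$ instead (harmless here since the roles of $|\y_1\>$ and $|\y_2\>$ are interchangeable, but worth correcting); and your reading of the support hypothesis in the non-Hermitian case, namely that both $|\y_1\>$ and $|\y_2\>$ lie in $\supp(\check{\r}_u)\otimes\cH_\ub$, is marginally stronger than the literal condition $\supp(\r_u)\subset\supp(\check{\r}_u)$, which under the paper's definition of support (the orthogonal complement of the kernel) constrains only $|\y_2\>$. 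The polarization reduction genuinely needs your stronger reading, whereas the paper's algebraic argument does not; since the stronger reading holds in both applications of the lemma in section \ref{sec:ex}, this does not affect anything downstream, but you should state it as an explicit hypothesis rather than as "the natural reading."
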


\begin{proof}
Let $\r$ be either a density operator on $\cH^\seed$, or $|\y_1\>\<\y_2|$ with any two states $|\y_1\>$, $|\y_2\>$ in $\cH^\seed$. In both cases, it satisfies $\r_u \s_u^{-1} \in \cA^\seed_u$ (where $\s$ is the maximally mixed state on $\cH^\seed$). In the latter case, this is because $|\y_1\>\<\y_2|$ can be written as a linear combination of density operators.

Let ${\cA^\seed_u}'$ be the commutant of $\cA^\seed_u$ in $\cB(\cH_\UV)$. By the von Neumann bicommutant theorem, it suffices to prove that $\r_u \check{\r}_u^{-1}$ commutes with every $\cO \in {\cA^\seed_u}'$.
To do so, we note that such an $\cO$ by definition commutes with $\r_u \s_u^{-1}, \check{\r}_u \s_u^{-1} \in \cA^\seed_u$ and find
\ba
\cO \r_u \check{\r}_u^{-1} &= \cO \r_u \s_u^{-1} \s_u \check{\r}_u^{-1}
 = \r_u \s_u^{-1} \cO \s_u \check{\r}_u^{-1} = \r_u P_{\supp(\check{\r}_u)} \s_u^{-1} \cO \s_u \check{\r}_u^{-1} \\
&= \r_u \check{\r}_u^{-1} \check{\r}_u \s_u^{-1} \cO \s_u \check{\r}_u^{-1} = \r_u \check{\r}_u^{-1} \cO \check{\r}_u \s_u^{-1} \s_u \check{\r}_u^{-1} = \r_u \check{\r}_u^{-1} \cO \check{\r}_u \check{\r}_u^{-1} \\
&= \r_u \check{\r}_u^{-1} \cO P_{\supp(\check{\r}_u)} = \r_u \check{\r}_u^{-1} P_{\supp(\check{\r}_u)} \cO = \r_u \check{\r}_u^{-1} \cO,
\ea
where in the last step of the first line we used $\supp(\r_u) \subset \supp(\check{\r}_u)$, and in going to the last line we used $P_{\supp(\check{\r}_u)} \in \cA^\seed_u$ (and thus it commutes with $\cO$); to see this, note that $\supp(\check{\r}_u)$ is the range of $\check{\r}_u$, which is the range of $\check{\r}_u \s_u^{-1}$ (because $\supp(\r_u) \subset \supp(\check{\r}_u)$), which in turn is the range of a Hermitian operator $\check{\r}_u \s_u^{-1} (\check{\r}_u \s_u^{-1})^\dag$ in $\cA^\seed_u$, and $\cA^\seed_u$ must contain the projection onto this range (as it contains all spectral projections).
\end{proof}

\subsection{Two seed states with a common $|\c\>$}

In this example, we choose
\be
\cH_\UV = \cH_u \otimes \cH_\ub,\qu
\cH_u = \bC^2 \otimes \cH_o,\qu
\cH_\ub = \bC^2 \otimes \cH_\ob,
\ee
where $\cH_o$, $\cH_\ob$ are arbitrary Hilbert spaces. Let $\{|0\>, |1\>\}$ be an orthonormal basis of the two-dimensional Hilbert space $\bC^2$, and $|\c\>_{o\ob}$ be an arbitrary state in $\cH_{o\ob} = \cH_o \otimes \cH_\ob$.

We will now choose two seed states in $\cH_\UV$. The first is chosen as
\be
|\y_1\> = |00\> \otimes |\c\>_{o\ob},
\ee
where the first (second) $0$ specifies the state on the $\bC^2$ factor in $\cH_u$ ($\cH_\ub$).

If we were to choose the second seed state as $|\y_2\> = |10\> \otimes |\c\>_{o\ob}$, our construction would simply produce $\cH^\pre_\IR = \cH^\pre_r\otimes \cH^\pre_\rb$ with a two-dimensional $\cH^\pre_r$ and a one-dimensional $\cH^\pre_\rb$. If instead we were to choose the second seed state as $|\y_2\> = |11\> \otimes |\c\>_{o\ob}$, our construction would produce $\cH^\pre_\IR = \bigoplus_{\b\in \{1,2\}} \cH_r^\b \otimes \cH_\rb^\b$ consisting of 2 one-dimensional $\b$-sectors. In both of these cases, $\cH^\pre_\IR = \cH^\seed$.

To make the example more nontrivial, we choose the second seed state to be
\be
|\y_2\> = \fr{1}{\sqrt{2}}(|01\>+|10\>) \otimes |\c\>_{o\ob},
\ee

We now build $\cH^\pre_\IR$ using Step 1 of our construction (described in section \re{sec:genpre}). First, we show
\be\la{aseedex1}
\cA^\seed_u = \cB(\bC^2) \otimes \vN \lt\{P_{\supp(\c_o)}\rt\}.
\ee
To see this, we use Lemma~\re{aseedlemma} with $\check{\r} = |\y_2\>\<\y_2|$ and thus $\check{\r}_u= \fr{1}{2} \begin{psmallmatrix} 1 & 0\\ 0 & 1 \end{psmallmatrix} \otimes \c_o$. With the choice $\r=|\y_1\>\<\y_1|$, the lemma tells us that $\cA^\seed_u$ contains
\be\la{rcr1}
\r_u \check{\r}_u^{-1} = 2 \begin{pmatrix} 1 & 0\\ 0 & 0 \end{pmatrix} \otimes P_{\supp(\c_o)}.
\ee
With another choice $\r=|\y_1\>\<\y_2|$, the lemma tells us that $\cA^\seed_u$ contains
\be\la{rcr2}
\r_u \check{\r}_u^{-1} = \sqrt 2 \begin{pmatrix} 0 & 1\\ 0 & 0 \end{pmatrix} \otimes P_{\supp(\c_o)}.
\ee
The operators \er{rcr1} and \er{rcr2} already generate the right-hand side of \er{aseedex1}, and $\cA^\seed_u$ cannot be larger than that, so we have shown \er{aseedex1}.

Acting with this $\cA^\seed_u$ on $\cH^\seed$, we find that our construction gives
\be
\cH^\pre_\IR = \operatorname{span} \{|i\>_r \otimes |j\>_\rb \otimes |\c\>_{o\ob}: i, j=0, 1 \},
\ee
which is four-dimensional. Our construction recognized that the original $\cH^\seed$ could not satisfy complementary recovery (as discussed in section \re{sec:genpre}) and added in two additional states so that the resulting $\cH^\pre_\IR$ satisfies complementary recovery.

We then build $\cH_\IR$ by chopping $|\c\>_{o\ob}$ into pieces defined by appropriate eigenvalue windows. This part is similar to what we already did for the single-seed-state example of section \re{sec:onestate}, so we will not repeat it here.

\subsection{Two seed states without a common $|\c\>$}

We now choose two seed states that do not precisely contain any common $|\c\>$ factor.  This is of course the generic case. We will present a large class of examples of this sort that require $\cH^\pre_\IR = \cH_\UV$ (and thus also $\cH_\IR = \cH_\UV$). This includes cases where the two seed states are arbitrarily close to each other and where $\cH_u$, $\cH_\ub$ have arbitrarily large dimensions.

Before proceeding, we prove another useful lemma.

\begin{lemma}\la{alglemma}
Let $D_1$, $D_2$ be two $N$-dimensional diagonal matrices with real, non-degenerate, but otherwise arbitrary diagonal elements. Let $U$ be any $N$-dimensional unitary matrix whose matrix elements are all nonzero. Let $\cO=D_1 +i U D_2 U^\dag$. Then
\be
\vN \{\cO\} = \cB(\bC^N).
\ee
\end{lemma}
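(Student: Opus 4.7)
The plan is to extract Hermitian and anti-Hermitian parts of $\cO$ and then use the spectral theorem on each, combined with the fact that projections from two ``generic'' bases (related by a unitary with all nonzero entries) together generate the full matrix algebra.

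First I would note that $\vN\{\cO\}$, being closed under Hermitian conjugation and linear combinations, contains $\tfrac{1}{2}(\cO+\cO^\dag) = D_1$ and $\tfrac{1}{2i}(\cO-\cO^\dag) = UD_2U^\dag$. Since $D_1$ is Hermitian with non-degenerate spectrum, the spectral theorem (realized concretely by Lagrange-interpolating polynomials in $D_1$) shows that each rank-one spectral projector $P_i := |e_i\>\<e_i|$ lies in $\vN\{D_1\} \subset \vN\{\cO\}$. Similarly, since $D_2$ is Hermitian with non-degenerate spectrum, so is $UD_2U^\dag$, and its rank-one spectral projectors are $Q_j := U|e_j\>\<e_j|U^\dag$, which therefore also lie in $\vN\{\cO\}$.

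Next I would compute the triple product
\be
P_i Q_j P_k = |e_i\>\<e_i|U|e_j\>\<e_j|U^\dag |e_k\>\<e_k| = U_{ij}\ol{U_{kj}} \, |e_i\>\<e_k|,
\ee
which lies in $\vN\{\cO\}$ for every $i,j,k$. The hypothesis that all matrix elements of $U$ are nonzero means $U_{ij}\ol{U_{kj}}\neq 0$, so dividing by this scalar shows that the matrix unit $|e_i\>\<e_k|$ is in $\vN\{\cO\}$ for every pair $(i,k)$. Since the matrix units span all of $\cB(\bC^N)$, this yields $\vN\{\cO\} = \cB(\bC^N)$.

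There is no real obstacle here; the only subtlety worth flagging is the use of the non-degeneracy assumptions to guarantee that the \emph{rank-one} projectors (rather than merely some coarser spectral projectors) are generated, since it is the rank-one versions that make $P_i Q_j P_k$ a scalar multiple of a single matrix unit. If one tried to weaken the non-degeneracy hypothesis, one would need the two spectral decompositions to be jointly ``generic'' in a stronger sense than the entrywise nonvanishing of $U$.
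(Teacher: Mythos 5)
Your proposal is correct and follows essentially the same route as the paper's proof: extract $D_1$ and $UD_2U^\dag$ from the Hermitian and anti-Hermitian parts, obtain the rank-one spectral projectors from non-degeneracy, and form the triple product $P_i Q_j P_k = U_{ij}\ol{U_{kj}}\,|e_i\>\<e_k|$ to recover all matrix units since the entries of $U$ are nonzero. Your remark about why rank-one (rather than coarser) spectral projectors are needed is a fair observation that the paper leaves implicit.
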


\begin{proof}
The von Neumann algebra $\vN \{\cO\}$ generated by $\cO$ obviously contains $D_1$ and $U D_2 U^\dag$. Since these are both Hermitian with non-degenerate eigenvalues, $\vN \{\cO\}$ must contain the projection onto each of their eigenvectors (as it contains all spectral projections). Let $\{|j\>\}$ be the orthonormal basis in which $D_1$ is diagonal. Then $\{|\td k\> = U|k\>\}$ is an orthonormal basis that diagonalizes $U D_2 U^\dag$. Thus $\vN \{\cO\}$ contains all projections $|j\>\<j|$ and $|\td k\>\<\td k|$. It must therefore also contain
\be
|j\>\<j| \td k\>\<\td k |j'\>\<j'| = U_{jk} (U_{j'k})^* |j\>\<j'|,
\ee
and thus also contain $|j\>\<j'|$ for all $j$, $j'$ (as the matrix elements of $U$ are all nonzero). These operators obviously generate the entire matrix algebra $\cB(\bC^N)$.
\end{proof}

A special choice of a unitary matrix $U$ without any zero matrix elements is the discrete Fourier transform matrix:
\be
U_{jk} = \fr{e^{-2\pi i jk/N}}{\sqrt{N}},
\ee
although there are many other such matrices (and in fact they are generic among all unitary matrices).

We now return to presenting our example. Let $\cH_\UV = \cH_u \otimes \cH_\ub$ where $\cH_u$, $\cH_\ub$ have the same (arbitrary) dimension $N$. We choose the first seed state $|\y_1\>$ to be any state in $\cH_\UV$ whose reduced density operator $\check{\r}_u$ on $\cH_u$ has full rank. We choose the second seed state to be
\be
|\y_2\> = \left[\left({\mathbbm 1}_u+\e \cO\right) \otimes{\mathbbm 1}_{\bar u}\right] |\y_1\>,
\ee
where $\e$ is any nonzero complex number and $\cO$ is an operator on $\cH_u$ whose matrix representation in some chosen orthonormal basis is given by $\cO=D_1 +i U D_2 U^\dag$ with matrices $D_1$, $D_2$, $U$ as in Lemma \re{alglemma}. Using
Lemma \re{aseedlemma} with $\r=|\y_2\>\<\y_1|$, $\check{\r}=|\y_1\>\<\y_1|$ tells us that $\cA^\seed_u$ contains
\be
\r_u \check{\r}_u^{-1} = (\mathbbm 1_u+\e \cO)\check{\r}_u \check{\r}_u^{-1} = \mathbbm 1_u+\e \cO.
\ee
According to Lemma \re{alglemma}, this operator must generate the entire $\cB(\cH_u)$, giving immediately
\be
\cA^\seed_u = \cB(\cH_u).
\ee

Acting with this $\cA^\seed_u$ on $\cH^\seed$, and noting that $\check{\r}_u$ having full rank means that $|\y_1\>$ is a so-called cyclic vector for $\cB(\cH_u)$, it is then a standard result that this action gives
\be
\cH^\pre_\IR = \cH_\UV,
\ee
and thus $\cH_\IR = \cH_\UV$ as well. Our construction did not find any common $|\c\>$ part among the seed states to integrate out, so it simply gives back the original $\cH_\UV$. Note that this is true even if we take the two seed states to be very close to each other (e.g., by making $\e$ very small) and even if $\cH_u$, $\cH_\ub$ have arbitrarily large dimensions.

\section{Discussion}

\label{sec:disc}

Our work above provided an explicit construction taking as input a `UV' code with exact two-sided recovery and flat entanglement spectrum, together with a set of seed states.  The output of our construction was then an `IR' code with two-sided recovery and (to some given approximation) flat entanglement spectrum.  We think of this as modeling the action of bulk renormalization-group flow on holographic codes. However, in the limit in which the UV cutoff is removed one may also think of our RG flow as directly describing the relation between the full underlying dual CFT and the bulk effective field theory at an appropriate IR scale. Our construction identified the smallest possible subspace of $\cH_\UV$ which allows two-sided recovery and then `chopped' those states into pieces associated with small windows of eigenvalues for appropriately defined code modular Hamiltonians (see footnote~\re{codeK}), with the size of the windows controlling the approximation to which the resulting IR code has flat entanglement spectrum.

As advertised in the introduction, our construction allows non-trivial centers in both the UV and the IR.  This is more general than the framework considered in \cite{Gesteau:2023hbq}. Furthermore, in our construction the IR centers generally have no simple relation to those in the UV. A possible alternative construction is to bin the eigenvalues separately within each $\a$-sector of the UV code. In that case, one would obtain an RG flow where there is a natural inclusion of the UV central operators in the center of the IR algebra.

As noted previously, another important way in which our work differs from \cite{Gesteau:2023hbq} is that we provided an algorithmic construction of an IR code starting from the UV code given a set of states preserved by the RG flow, whereas \cite{Gesteau:2023hbq} assumed the existence of such a sequence of nested codes. Given such a sequence, one could, of course, use our idea of binning eigenvalues to construct a truly holographic code with an approximately flat entanglement spectrum. However, we also note that \cite{Gesteau:2023hbq} generalized their result to allow infinite-dimensional Hilbert spaces, whereas we worked with finite-dimensional Hilbert spaces for simplicity.  It would be useful to attempt to merge these approaches in the future to extend our results to infinite-dimensional cases.

It is natural to use our construction to help control analyses in contexts where small eigenvalues of reduced density matrices are problematic.  In particular, as pointed out in \cite{Kudler-Flam:2022jwd}, the JLMS relation can fail dramatically in such contexts.  Recall then that on a UV Hilbert space of dimension $D_\UV$, the smallest eigenvalues of a density matrix is necessarily of size $1/D_\UV$ or smaller, which can indeed be very small in the far UV.  For states that can be described by an IR Hilbert space of dimension $D_\IR \ll D_\UV$, the associated issues are then much reduced.  In general, this will reflect the fact that while small eigenvalues $\lambda$ may lead to large effects (perhaps through factors of $\ln \lambda$ or $\lambda^{-1}$), these effects can then be canceled by additional factors of $\lambda$ that may arise when a given eigenvalue is common to {\it all} states being studied (e.g., to all states that appear in $\cH_\IR$).

However, as illustrated by the second example in section \ref{sec:ex}, some choices of seed states can in fact preclude any reduction in the size of the code Hilbert space (no matter how large the eigenvalue windows are).  In such cases, our construction gives $\cH_\IR=\cH_{\UV}$ and thus offers little help in ameliorating the above concerns.   One may also be concerned that, even when it provides small code Hilbert spaces of small dimension (where a small code means a superselection sector in the IR), the construction given here can lead to a very large number of IR  superselection sectors (especially when the eigenvalue windows are taken to be very small so that each small code has very flat entanglement spectrum).  It seems likely that our construction can be further improved (in the sense of reducing the size of $\cH_\IR$ and, in particular, the number of IR superselection  sectors) by allowing additional errors in the IR FLM and R\'enyi-FLM formulas and, in particular, by allowing two-sided recovery (of the IR-to-UV map) to hold only in an approximate sense.  Indeed, we will show in \cite{ModFlow} that such improvements are quite useful when studying the exponentiated JLMS relation between bulk and boundary modular flows for semiclassical states.   Nevertheless, we leave for future work any  thorough study of this idea and of the associated trade-offs between code size and code precision. 

On the other hand, it is important to comment here on settings where we are given a UV code that satisfies two-sided reconstruction and flatness of the entanglement spectrum only up to some approximation.  In particular, in AdS/CFT we expect two-sided recovery to become imperfect at the order in the bulk Newton's constant $G$ where one can no longer neglect quantum fluctuations in the relevant quantum extremal surface.  In addition, taking the entanglement-spectrum of the UV code to be exactly flat appears to require projecting onto an exact eigenvalue of the area operator, which would then involve physics far outside the regime of semiclassical control; see related discussions in e.g.~\cite{Marolf:2018ldl,Dong:2019piw,Dong:2022ilf}.

Let us therefore suppose that the UV FLM formula \eqref{flmuv} and its R\'enyi analogue hold only up to some error bounds $\epsilon_{FLM}$ and $\epsilon_{FLM}(n)$. 
Recall that our construction proceeded entirely in the bulk, so that we were able to both define the IR code and to relate it to the UV code without relying on the UV (R\'enyi) FLM formulas for any R\'enyi index $n$.  Indeed, the final result was simply to equate the right-hand-side of the UV FLM formula with the right-hand-side of the IR FLM formula (and similarly for the R\'enyi generalizations up to an error set by the widths of our eigenvalue windows).  Repeating the arguments in the presence of non-zero $\epsilon_{FLM}$ and $\epsilon_{FLM}(n)$ then immediately gives an IR FLM formula with the same error bound $\epsilon_{FLM}$, as well as 
an IR R\'enyi FLM formula up to error bounds that are increased by $\epsilon_{FLM}(n)$ relative to those in section \ref{sec:general}. With this understanding, our analysis can be applied directly to the holographic bulk-to-boundary map at finite UV cutoff with finite-but-small bulk Newton's constant $G$.  We will make use of this as part of our study of modular flows in AdS/CFT in \cite{ModFlow}.

\section*{Acknowledgements}

This material is based upon work supported by the Air Force Office of Scientific Research under Award Number FA9550-19-1-0360. This material is also based upon work supported by the U.S. Department of Energy, Office of Science, Office of High Energy Physics, under Award Number DE-SC0011702. This work was supported in part by the Leinweber Institute for Theoretical Physics; and
by the Department of Energy, Office of Science, Office of High Energy Physics under
Award DE-SC0025293.

\bibliographystyle{JHEP}
\bibliography{bibliography}

\end{document}